\begin{document}

\title{Robust Sparse Analysis Regularization}

\author{Samuel~Vaiter,
        Gabriel~Peyr\'{e},
        Charles~Dossal
        and~Jalal Fadili
\thanks{S. Vaiter and G. Peyr\'{e} are with CNRS and CEREMADE, Universit\'{e} Paris-Dauphine, Place du Mar\'{e}chal De Lattre De Tassigny, 75775 Paris Cedex 16, France, email: \texttt{samuel.vaiter@ceremade.dauphine.fr}}
\thanks{C. Dossal is with IMB, Universit\'{e} Bordeaux 1, 351, Cours de la lib\'{e}ration, 33405 Talence Cedex, France.}
\thanks{Jalal Fadili is with GREYC, CNRS-ENSICAEN-Universit\'{e} de Caen, 6, Bd du Mar\'{e}chal Juin, 14050 Caen Cedex, France.}}

\maketitle

\begin{abstract}
  This paper investigates the theoretical guarantees of $\lun$-analysis regularization when solving linear inverse problems. Most of previous works in the literature have mainly focused on the sparse synthesis prior where the sparsity is measured as the $\lun$ norm of the coefficients that synthesize the signal from a given dictionary. In contrast, the more general analysis regularization minimizes the $\lun$ norm of the correlations between the signal and the atoms in the dictionary, where these correlations define the analysis support. The corresponding variational problem encompasses several well-known regularizations such as the discrete total variation and the Fused Lasso.

Our main contributions consist in deriving sufficient conditions that guarantee exact or partial analysis support recovery of the true signal in presence of noise. More precisely, we give a sufficient condition to ensure that a signal is the unique solution of the $\lun$-analysis regularization in the noiseless case. The same condition also guarantees exact analysis support recovery and $\ldeux$-robustness of the $\lun$-analysis minimizer vis-\`a-vis an enough small noise in the measurements. This condition turns to be sharp for the robustness of the analysis support. To show partial support recovery and $\ldeux$-robustness to an arbitrary bounded noise, we introduce a stronger sufficient condition. When specialized to the $\lun$-synthesis regularization, our results recover some corresponding recovery and robustness guarantees previously known in the literature. From this perspective, our work is a generalization of these results. We finally illustrate these theoretical findings on several examples to study the robustness of the 1-D total variation and Fused Lasso regularizations.
\end{abstract}

\begin{IEEEkeywords}
sparsity, analysis regularization, synthesis regularization, inverse problems, $\lun$ minimization, union of subspaces, noise robustness, total variation, wavelets, Fused Lasso.
\end{IEEEkeywords}
\section{Introduction} 
\label{sec:intro}

\subsection{Inverse Problems and Signal Priors} 
\label{sub:ip}

This paper considers the stability of regularized inverse problems using sparsity-promoting priors.
The forward model in many data acquisition scenarios can be formulated as the action of a linear mapping on some unknown (sought-after) signal contaminated by an additive noise. This takes the form
\begin{equation}\label{eq:ip}
  y = \Phi x_0 + w,
\end{equation}
where $y \in \RR^Q$ are the observations, $x_0 \in \RR^N$ the unknown signal to recover, $w$ the noise supposed to be of bounded $\ldeux$-norm, and $\Phi$ a bounded linear operator which maps the signal domain $\RR^N$ into the observation domain $\RR^Q$ where generally $Q \leq N$.
Even when $Q=N$, the mapping $\Phi$ is in general ill-conditioned or even singular. This makes the problem of solving for an accurate approximation of $x_0$ from the system \eqref{eq:ip} ill-posed, see for instance~\cite{kirsch2011introduction} for an introduction to inverse problems.

However, the situation radically changes if one has some prior information about the underlying object $x_0$. Regularization is a popular way to impose such a prior, hence making the search for solutions feasible. The general variational problem we consider can be stated as
\begin{equation}\label{eq:reg-noise}
  \umin{x \in \RR^N} \dfrac{1}{2} \norm{y - \Phi x}_2^2 + \lambda \J(x) ,
\end{equation}
where the first term is the data fidelity reflecting $\ldeux$-boundedness of the noise, and $\J$ is an appropriate (prior) regularization term through which some regularity is enforced on the recovered signal. The regularization parameter $\la>0$ should be adapted to balance between the allowed fraction of noise level and regularity as dictated by the prior on $x_0$.

For noiseless observations, i.e. $w=0$, taking the limit $\lambda \to 0$, we end up solving the constrained problem
\begin{equation}\label{eq:reg-noiseless}
  \umin{x \in \RR^N} \J(x) \qsubjq \Phi x = y .
\end{equation}

A popular class of priors are quadratic forms $\J(x) = \dotp{x}{Kx}$ where $K$ is a symmetric semidefinite positive kernel.
Problems \eqref{eq:reg-noise} and \eqref{eq:reg-noiseless} then correspond to Tikhonov regularization which typically induces some kind of uniform smoothness in the recovered signal. More advanced priors that have received considerable interest in the recent years rely on non-quadratic, generally nonsmooth, functionals such as those promoting sparsity of the signal in some transform domain (e.g. its wavelet transform or its derivatives). These sparsity priors are at the heart of this paper. They will be discussed in more detail after some necessary definitions and notations are first introduced in the following section.


\subsection{Notations} 
\label{sub:notations}

Throughout the paper, we focus on real vector spaces. The variable $x$ will denote a vector in $\RR^N$, $y$ will be a vector in $\RR^Q$ and $\alpha$ a vector in $\RR^P$.

The sign vector $\sign(\alpha)$ of $\alpha \in \RR^P$ is
\begin{equation*}
  \forall i \in \ens{1, \dots, P}, \quad
  \sign(\alpha)_i =
  \begin{cases}
    + 1 & \qifq \alpha_i > 0,\\
    0 & \qifq \alpha_i = 0, \\
    - 1 & \qifq \alpha_i < 0.
  \end{cases}
\end{equation*}
It suport is
\begin{equation*}
  \supp(\alpha) = \enscond{i \in \ens{1, \dots, P}}{\alpha_i \neq 0} .
\end{equation*}
For a set $I \in E$, $\abs{I}$ will denote its cardinality, and $I^c=E \setminus I$ its complement.

The $p, q$-operator (induced) norm of a matrix $M$ is
\begin{equation*}
  \norm{M}_{p, q} = \umax{x \neq 0} \frac{\norm{Mx}_q}{\norm{x}_{p}} .
\end{equation*}

The matrix $M_J$ for $J$ a subset of $\ens{1, \dots, P}$ is the submatrix whose columns are indexed by $J$.
Similarly, the vector $s_J$ is the restriction of $s$ to the entries of $s$ indexed by $J$.

The matrix $\Id$ is the identity matrix, where the underlying space will be clear from the context.
For any matrix $M$, $M^+$ is the Moore--Penrose pseudoinverse of $M$ and $M^*$ is the adjoint matrix of $M$.
$M^{+,*}$ is the adjoint of the Moore--Penrose pseudoinverse of $M$.


\subsection{Synthesis and Analysis Sparsity Priors} 
\label{sub:avs}

\paragraph{Synthesis sparsity prior} 
\label{par:synthesis}

Sparse regularization is a popular class of priors to model natural signals and images, see for instance~\cite{mallat2009a-wav}.
In its simplest form, the sparsity of coefficients $\alpha \in \RR^P$ is measured using the $\lzero$ pseudo-norm
\begin{equation*}
  \J_0(\alpha) = \norm{\alpha}_0 = \abs{ \supp (\alpha) } .
\end{equation*}
Minimizing \eqref{eq:reg-noise} or \eqref{eq:reg-noiseless} with $\J = \J_0$ is however known to be NP-hard, see for instance~\cite{natarajan1995sparse}.
Several workarounds have been proposed to alleviate this difficulty. A first family of methods relies on greedy algorithms~\cite{needell2008greedy}.
The most popular ones are Matching Pursuit~\cite{mallat1993matching} and Orthogonal Matching Pursuit~\cite{pati1993orthogonal,davis94adaptive}.
A second family of methods, which is the focus of this paper, relies on convex relaxation which amounts to replacing the $\lzero$ pseudo-norm by the $\lun$ norm~\cite{donoho2006most}.

A dictionary $D=(d_i)_{i=1}^{P}$ is a (possibly redundant, i.e. $P > N$) collection of $P$ atoms $d_i \in \RR^N$.
It can also be viewed as a linear mapping from $\RR^P$ to $\RR^N$ which is used to synthesize a signal $x \in \Im(D) \subseteq \RR^N$ as
\begin{equation*}
  x = D \alpha = \sum_{i=1}^P \alpha_i d_i ,
\end{equation*}
where $\alpha$ is the coefficient vector that synthesizes $x$ from the dictionary $D$.
The sparsest set of coefficients, according to the $\lun$ norm, defines a prior
\begin{equation*}
  \J_S(x) = \min_{\alpha \in \RR^P} \normu{\alpha} \qsubjq x = D \alpha.
\end{equation*}
Therefore any solution $x$ of \eqref{eq:reg-noise} using $\J=\J_S$ can be written as $x=D \alpha$ where $\alpha$ is a solution of
\begin{equation}\label{eq:lasso-s}
  \umin{\alpha \in \RR^P} \dfrac{1}{2} \norm{y - \Psi \alpha}_2^2 + \lambda \normu{\alpha} ,
\end{equation}
where $\Psi = \Phi D$.
$\lun$ regularization was first considered in the statistical community in \cite{tibshirani1996regre} where it was coined Lasso. Note that it was originally introduced as an $\lun$-ball constrained optimization and in the overdetermined case.
It is also known in the signal processing community as Basis Pursuit DeNoising~\cite{chen1999atomi}.
Such a problem corresponds to the so-called sparse synthesis regularization as sparsity is assumed on the coefficients $\alpha$ that synthesize the signal $x = D \alpha$.
In the noiseless case, the constrained problem \eqref{eq:reg-noiseless} becomes
\begin{equation}\label{eq:bp-s}
  \umin{\alpha \in \RR^P} \normu{\alpha} \qsubjq y = \Psi \alpha ,
\end{equation}
which goes by the name of Basis Pursuit after \cite{chen1999atomi}.
Taking $D = \Id$ amounts to assuming sparsity of the signal itself, and was used for instance for sparse spike train deconvolution in seismic imaging~\cite{santosa1986linear}.
Sparsity in orthogonal as well as redundant wavelet dictionaries are popular to model natural signals and images that exhibit certain singularities~\cite{mallat2009a-wav}.


\paragraph{Analysis sparsity prior} 
\label{par:analysis}

Analysis regularization corresponds to using $\J=\J_A$ in \eqref{eq:reg-noise} where
\begin{equation*}
  \J_A(x) = \normu{D^* x} = \sum_{i=1}^P \abs{\dotp{d_i}{x}}
\end{equation*}
which leads to the following minimization problem
\begin{equation}\label{eq:lasso-a}\tag{\lassotag}
  \umin{x \in \RR^N} \minform .
\end{equation}
Of course, $D^*$ is not in general the adjoint operator of a full rank dictionary $D$. Note that the analysis problem \eqref{eq:lasso-a} is more general than the synthesis one \eqref{eq:lasso-s} because the latter is recovered by taking $D=\Id$ and $\Psi=\Phi$ in the former.

As the objective in \eqref{eq:lasso-a} is proper (i.e. not infinite everywhere), continuous and convex, it is a classical existence result that the set of (global) minimizers is nonempty and compact if and only if
\begin{equation}\label{eq:H0}\tag{$H_0$}
  \Ker \Phi \cap \Ker D^*= \ens{0} .
\end{equation}
From now on, we suppose that this condition holds.

In the noiseless case, the $\lun$-analysis equality-constrained problem is
\begin{equation}\label{eq:bp-a}\tag{\bptag}
  \umin{x \in \RR^N} \normu{D^* x} \qsubjq \Phi x = y .
\end{equation}

One of the most popular analysis sparsity-inducing regularizations is the total variation, which was first introduced for denoising (in a continuous setting) in~\cite{rudin1992nonlinear}. It roughly corresponds to taking $D^*$ as a derivative operator.
Typically, for 1-D discrete signals, $D$ can be taken as a dictionary of forward finite differences $D_{\DIF}$ where
\begin{equation}\label{eq-d-diff}
  D_{\DIF} = 
  \begin{pmatrix}
    -1 & & \multicolumn{2}{c}{\text{\kern0.5em\smash{\raisebox{-1ex}{\Large 0}}}} \\
    +1 & -1 & \\
    & +1 & \ddots  & \\
    & & \ddots & -1 \\
    \multicolumn{2}{c}{\text{\kern-0.5em\smash{\raisebox{0.75ex}{\Large 0}}}} & & +1
  \end{pmatrix} .
\end{equation}
The corresponding prior $\J_A$ favors piecewise constant signals and images.
A comprehensive review of total variation regularization can be found in~\cite{chanbolle2009tv}.

The theoretical properties of total variation regularization have been previously studied.
A distinctive feature of this regularization is its tendency to yield a staircasing effect, where discontinuities not present in the original data might be artificially created by the regularization.
This effect has been studied by Nikolova in the discrete case in a series of papers, see e.g. \cite{nikolova2000local}, and in \cite{Ring2000} in the continuous setting. The stability of the discontinuity set of the solution of the 2-D continuous total variation based denoising problem is investigated in~\cite{caselles2008discontinuity}. Section \ref{sub:tv} shows how our results also shed some light on this staircasing effect for 1-D discrete signals.

It is also possible to use a dictionary $D$ of translation invariant wavelets, so that the corresponding regularization term $\J_A$ can be viewed as a multiscale (higher order) total variation \cite{steidl2005equivalence}. Such a prior tends to favor piecewise regular signals and images.
From a numerical standpoint, an extensive study is reported in is reported in~\cite{selesnick2009signal} using these redundant dictionaries to highlight differences between synthesis and analysis sparsity priors for inverse problems.

As a last example of sparse analysis regularization, we would like to mention the Fused Lasso~\cite{tibshirani2005sparsity}, where $D$ is the concatenation of a discrete derivative and a weighted identity. The corresponding prior $\J_A$ promotes both sparsity of the signal and its derivative, hence favoring the grouping of non-zero coefficients in blocks.


\paragraph{Synthesis versus analysis priors} 
\label{par:avs}

In a synthesis prior, the vector $\alpha$ that synthesizes the signal $x$ from the dictionary $D$ is sparse, whereas in an analysis prior, the correlation between the signal $x$ and the atoms in the $D$ is sparse. Some insights on the relation and distinction between analysis and synthesis-based sparsity regularizations were first given in \cite{elad2007analy}. When $D$ is orthogonal, and more generally when $D$ is square and invertible, \lasso and the Lasso entail equivalent regularizations in the sense that the set of minimizers of one problem can be retrieved from that of an equivalent form of the other  through a bijective change of variable. However, when $D$ is redundant, synthesis and analysis regularizations differ significantly.



\subsection{Union of Subspaces Model} 
\label{sub:union}

As analysis regularization involves the sparsity of the correlation vector $D^* x$, it is thus natural to keep track of the support of $D^* x$. To fix terminology, we define this support and its complement.

\begin{defn}\label{defn:dsup}
  The \emph{\dsup} $I$ of a vector $x \in \RR^N$ is $I =\supp(D^* x) \subset \ens{1,\dots,P}$.
  Its \emph{\dcosup}$J$ is $J = I^c=\ens{1,\dots,P} \setminus I$.
\end{defn}

A signal $x$ such that $D^* x$ is sparse lives in a subspace $\GJ$ of small dimension whose formal definition is as follows. 
\begin{defn}
  Given a dictionary $D$, and $J$ a subset of $\ens{1,\dots,P}$, the \emph{cospace} $\GJ$ is defined as
  \begin{equation*}
    \GJ = \Ker D_J^* ,
  \end{equation*}
  where we recall that $D_J$ is the subdictionary whose columns are indexed by $J$.
\end{defn}
Following the cosparse model introduced in \cite{nam2011the-c}, the signal space can thus be decomposed as 
\begin{equation*}
  \RR^N = \bigcup_{k \in \ens{0, \dots, N}}\Theta_k,
\end{equation*}
where
\begin{equation}\label{eq:subspaces}
  \Theta_k = \enscond{\Gg_J}{J \subseteq \ens{1,\dots,P} \text{ and } \dim \Gg_J = k} ,
\end{equation}
which is dubbed \emph{union of subspaces} of dimension $k$.\\

The union of subspaces associated to synthesis regularization, i.e. $D=\Id$, corresponds to $\Theta_k$ as the set of axis-aligned subspaces of dimension $k$.
For the 1-D total variation prior, where $D=D_{\DIF}$ as defined in \eqref{eq-d-diff}, $\Theta_k$ is the set of piecewise constant signals with $k-1$ steps.
Several examples of subspaces $\Theta_k$, including those corresponding to translation invariant wavelets, are discussed in~\cite{nam2011the-c}.

More general union of subspaces models (not necessarily corresponding to analysis regularizations) have been introduced in sampling theory to model various types of non-linear signal ensembles, see for instance~\cite{lu2008a-the}.
Union of subspaces models have been extensively studied for the recovery from pointwise sampling measurements~\cite{lu2008a-the} and compressed sensing measurements~\cite{eldar2009robust, baraniuk2010model, blumensath2009sampling, boufounos2011sparse}.


\subsection{Organization of this Paper} 
\label{sub:organization}
The rest of the paper is organized as follows.
Section \ref{sec:contrib} details our main contributions.
Section \ref{sec:related} draws some connections with relevant previous work.
Section \ref{sec:examples} illustrates our results in some examples.
The proofs are deferred to Section \ref{sec:proofs}.



\section{Contributions} 
\label{sec:contrib}

This paper proves the following three main results:
\begin{enumerate}
  \item \textbf{Robustness to small noise:} we provide a sufficient condition on $x_0$ ensuring that the solution of \lasso is unique, lives in the same cospace and close to $x_0$ when $w$ is small enough.

  \item \textbf{Noiseless identifiability:} under the same sufficient condition, $x_0$ is guaranteed to be the unique solution of \bp when $w=0$. 

  \item \textbf{Robustness to bounded noise:} we then give a sufficient condition that depends on the \dcosup of $x_0$ under which the solution of \lasso is unique and close to $x_0$ for an arbitrary bounded noise $w$, with the proviso that $\la$ is large enough.
\end{enumerate}
Each contribution will be rigorously described in a corresponding subsection.\\

It is worth mentioning that our results will extend previously known ones in the synthesis case, see for instance~\cite{dossal2007recovery,fuchs2004on-sp,tropp2006just-,tropp2004greed,candes2006robust}.
Additionally, there are only a few recent works that we aware of and which give provable guarantees using analysis regularization for exact recovery in the noiseless case \cite{nam2011the-c}, or accurate and robust recovery in the noisy case \cite{candes2010compr,giryes2012iterative,peleg2012cosparse,grasmair2011linear,burger2004convergence,NeedelWard12}. We will discuss this prior literature in detail in Section~\ref{sec:related}. Nevertheless, to the best of our knowledge, it appears that our work is the first that addresses the above three questions in the analysis case. \\

For some cosupport $J$, the invertibility of $\Phi$ on $\GJ$ will play a pivotal role in our theory.
This is achieved by imposing that
\begin{equation}\label{eq:hj}\tag{$H_J$}
  \Ker \Phi \cap \GJ = \ens{0}.
\end{equation}
To get the gist of the importance of \eqref{eq:hj}, consider the noiseless case where we want to recover a $D$-sparse signal $x_0$ from $y=\Phi x_0$. Let $J$ be the \dcosup of $x_0$ and assume that it is known. As $x_0 \in \GJ \cap \{x: y = \Phi x\}$, for $x_0$ to be uniquely recovered from $y$, \eqref{eq:hj} must be verified. Conversely, if $x_0$ is such that \eqref{eq:hj} does not hold, then any $x_0+h$, with $h \in \Ker \Phi \cap \GJ$, is also a candidate solution, i.e. $x_0+h \in \GJ \cap \{x: y = \Phi x\}$. Clearly, one cannot reconstruct such $D$-sparse objects.

With assumption \eqref{eq:hj} at hand, we are in position to define the following matrix whose role will be clarified shortly.
\begin{defn}
  Let $J$ be a \dcosup.
  Suppose that \eqref{eq:hj} holds.
  We define the operator $\AJ$ as
  \begin{equation}\label{eq:aj}
    \AJ = \UJ \pa{\UJ^* \Phi^* \Phi \UJ}^{-1} \UJ^* .
  \end{equation}
  where $\UJ$ is a matrix whose columns form a basis of $\GJ$. 
\end{defn}
The operator $\AJ$ can be computed without an explicit basis of $\GJ$ as an optimization problem
\begin{equation*}
  \AJ u = \uargmin{D_J^* x = 0} \frac{1}{2}\norm{\Phi x}^2 - \dotp{x}{u} .
\end{equation*}

\subsection{Robustness to Small Noise} 
\label{sub:contrib-small}

Our first contribution consists in showing that $\lun$-analysis regularization is robust to a small enough noise under a sufficient condition that depends on the sign of $D^* x_0$ and its \dcosup. This condition will be formulated via the following criterion.

\begin{defn}\label{def:asc}
	Let $s \in \{-1, 0, +1\}^P$, $I$ its support and $J = I^c$.
  Suppose that \eqref{eq:hj} holds.
  The analysis \emph{Identifiabiltiy Criterion} $\IC$ of $s$ is defined as
	\begin{equation*}
    \IC(s) = \umin{u \in \Ker D_J}
    \norm{\CJ^{[J]} s_I - u}_\infty
  \end{equation*}
  where
  \begin{equation*}
    \CJ^{[J]} = D_J^+(\Phi^* \Phi \AJ - \Id)D_I .
  \end{equation*}
\end{defn}

We have the following theorem.
\begin{thm}\label{thm:small}
  Let $x_0 \in \RR^N$ be a fixed vector of \dsup $I$ and \dcosup $J = I^c$. Let $y=\Phi x_0+w$.
  Assume that \eqref{eq:hj} holds and $\IC(\signxx) < 1$.
  Then there exist constants $c_J > 0$ and ${\tilde{c}}_J > 0$ satisfying
  \begin{equation*}
    \frac{\norm{w}_2}{T} < \frac{\tilde{c}_J}{c_J} \qandq T = \umin{i \in \ens{1,\cdots,\abs{I}}} \abs{D_I^* x_0}_i,
  \end{equation*}
  such that if $\lambda$ is chosen according to
  \begin{equation*}
    c_J \norm{w}_2 < \lambda < T{\tilde{c}}_J ,
  \end{equation*}
  the vector
  \begin{equation}\label{eq:sol-small}
    \xsolyp = x_0 + \AJ \Phi^* w - \lambda \AJ D_I \signxx ,
  \end{equation}
  is the unique solution of \lasso.
  Moreover,
  \begin{equation*}
    \xsolyp \in \GJ \qandq \signxx = \sign(D^*_I \xsolyp) .
  \end{equation*}
\end{thm}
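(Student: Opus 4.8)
The natural approach is to construct the candidate solution $\xsolyp$ explicitly, then certify its optimality via the subdifferential characterization of \lasso. Since $\J_A(x) = \normu{D^* x}$, the first-order optimality condition for \eqref{eq:lasso-a} reads $\Phi^*(\Phi x - y) + \lambda D \eta = 0$ for some $\eta \in \partial \normu{\cdot}(D^* x)$, i.e. $\eta_I = \sign(D_I^* x)$ and $\norm{\eta_J}_\infty \leq 1$. The plan is to (i) posit that the solution lies in $\GJ$ with $\sign(D_I^* x) = s := \signxx$; (ii) parametrize such $x$ by solving the reduced normal equations on $\GJ$, which forces the formula \eqref{eq:sol-small} via the operator $\AJ$; and (iii) produce an admissible dual vector $\eta_J$ with $\norm{\eta_J}_\infty \le 1$, which is precisely where $\CJ^{[J]}$ and the criterion $\IC(s) < 1$ enter — the minimizer over $u \in \Ker D_J$ in Definition \ref{def:asc} corresponds to the freedom in choosing $\eta_J$ modulo $\Ker D_J$ (since only $D_J \eta_J$ appears in the optimality condition).

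In more detail: restricting \eqref{eq:lasso-a} to $x \in \GJ$, write $x = \UJ z$; the smooth part has gradient $\UJ^*\Phi^*(\Phi\UJ z - y) + \lambda \UJ^* D_I s = 0$ (the $J$-block of $D$ contributes nothing on $\GJ$ after projecting onto $\UJ^*$... more carefully, one tracks $D^* x = (D_I^* x, 0)$), and \eqref{eq:hj} makes $\UJ^*\Phi^*\Phi\UJ$ invertible, yielding $x = \AJ\Phi^* y - \lambda \AJ D_I s = x_0 + \AJ\Phi^* w - \lambda \AJ D_I s$ after substituting $y = \Phi x_0 + w$ and using $\AJ \Phi^*\Phi x_0 = x_0$ for $x_0 \in \GJ$ (a consequence of \eqref{eq:aj}). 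This is \eqref{eq:sol-small}. It remains to verify that this $x$ is a genuine global minimizer of the \emph{unconstrained} problem, not just the restricted one. For that I would exhibit $\eta_J$: from the full optimality condition, $D_J \eta_J = -\frac{1}{\lambda}\Phi^*(\Phi x - y) - D_I s$, and since $x \in \GJ$ one checks $\Phi^*(\Phi x - y) = (\Phi^*\Phi\AJ - \Id)(\cdots)$; solving, any valid $\eta_J$ has the form $\CJ^{[J]} s_I + (\text{noise term})/\lambda + u$ with $u \in \Ker D_J$, and $\IC(s) < 1$ says the $u$-optimized sup-norm of the noiseless part is strictly below $1$, so for $\norm{w}_2$ small and $\lambda$ in the stated window $\norm{\eta_J}_\infty < 1$ strictly, giving uniqueness of the minimizer (strict dual feasibility plus injectivity of $\Phi$ on $\GJ$ is the standard route to uniqueness).

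Finally, two consistency checks close the argument: the sign condition $\sign(D_I^* \xsolyp) = s$ must actually hold for the posited $x$, which requires $|D_I^*\xsolyp|_i$ to stay positive with the correct sign — here the lower bound $T = \min_i |D_I^* x_0|_i$ is perturbed by the noise term $\AJ\Phi^* w$ and the bias term $\lambda \AJ D_I s$, both controlled once $\norm{w}_2 < T\tilde c_J/c_J$ and $\lambda < T\tilde c_J$; and the support of $D^*\xsolyp$ must not be larger than $I$, i.e. $D_J^*\xsolyp = 0$, which is automatic since $\xsolyp \in \GJ$. The constants $c_J, \tilde c_J$ are then read off from $\norm{\AJ\Phi^*}$, $\norm{\AJ D_I}$ and the relevant entrywise bounds. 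The main obstacle is step (iii): carefully handling the $\Ker D_J$ ambiguity in the dual vector and showing that the $\IC(s)<1$ bound on the noiseless dual certificate survives the perturbation by $w$ uniformly over the admissible $\lambda$-range, while \emph{simultaneously} keeping the sign pattern intact — i.e. the two inequalities on $\norm{w}_2$ and $\lambda$ must be shown to be jointly satisfiable, which is exactly the content of the condition $\norm{w}_2/T < \tilde c_J/c_J$.
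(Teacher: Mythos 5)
Your proposal follows essentially the same route as the paper: posit the candidate in $\GJ$ via the reduced normal equations (giving \eqref{eq:sol-small} through $\AJ$), certify global optimality and uniqueness by exhibiting a dual vector of the form $\CJ^{[J]} s_I - \bar u - \tfrac{1}{\lambda}\BJ^{[J]} w$ with $\bar u$ the minimizer over $\Ker D_J$ so that $\IC(s)<1$ yields strict dual feasibility, and separately enforce sign consistency via the lower bound $T$, checking that the two constraints on $\lambda$ are jointly satisfiable. This is exactly the paper's three-step argument, so the strategy is sound and matches.
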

In plain words, Theorem~\ref{thm:small} asserts that when $\IC(\signxx) < 1$, the support and sign of $D^*x_0$ are exactly recovered by solving \lasso with $\la$ wisely chosen and provided that the non-zero entries of $D_I^* x_0$ are large enough compared to noise.
In addition, if $\la$ is chosen proportional to the noise level, \eqref{eq:sol-small} gives
\begin{equation*}
  \norm{\xsolyp - x_0}_2 = O(\norm{w}_2) .
\end{equation*}

\begin{rem}
One may question the benefit of minimizing over $\Ker D_J$ in the criterion $\IC$. First note that $\IC(s)$ is upper-bounded by $\norm{\CJ^{[J]} s_I}_\infty$. For $D$ with maximally linear independent columns, $\Ker D_J=\{0\}$ and $\IC$ is large. On the other hand, when $\Ker D_J $ is large, minimizing the (translated) $\linf$-norm over $\Ker D_J$ is likely to produce lower values of $\IC$. In a nutshell, linear dependencies among the columns of $D$, in some sense, are desirable to optimize the value of $\IC$. This is in agreement with the observations of \cite{nam2011the-c}.
\end{rem}

At this stage, one may wonder whether the sufficient condition $\IC(\signxx) < 1$ can be weakened while ensuring sign consistency and cospace recovery by solving \lasso in presence of small noise. The following proposition provides a first answer by proving that the condition is in some sense necessary.
\begin{prop}\label{prop:icsupnoise}
  Let $x_0 \in \RR^N$ be a fixed vector of \dcosup $J$. Let $y=\Phi x_0+w$.
  Suppose that $(H_J)$ holds and $\IC(\signxx) > 1$.
  If
  \begin{equation*}
  	\frac{1}{\lambda} \normi{\BJ^{[J]} w} < \IC(\signxx) - 1,
  \end{equation*}
  then for any solution $\xsoly$ of \lasso, we have
  \begin{equation*}
   	\signxx \neq \sign(D^* \xsoly) .
   \end{equation*} 
\end{prop}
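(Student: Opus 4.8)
The strategy is a proof by contradiction, mirroring the "necessity" arguments familiar from the synthesis Lasso (e.g. the implicit-function / optimality-condition analysis in \cite{fuchs2004on-sp,dossal2007recovery}). Suppose there is a solution $\xsoly$ of \lasso with $\sign(D^*\xsoly) = \signxx$; call this common sign $s$, with support $I$ and cosupport $J$. Since $\IC(\signxx)>1$ we in particular have $\IC(s)$ well defined, so \eqref{eq:hj} is assumed to hold. The first step is to write down the first-order optimality condition for \lasso at $\xsoly$: there must exist a vector $\eta$ in the subdifferential of $\normu{D^*\cdot}$ at $\xsoly$ such that
\begin{equation*}
  \Phi^*(\Phi \xsoly - y) + \lambda\, \eta = 0, \qquad \eta = D\, p, \quad p_I = s_I, \quad \normi{p_J} \le 1 .
\end{equation*}
Because $\xsoly \in \GJ = \Ker D_J^*$, one can solve the part of the optimality system supported on the cospace: projecting the stationarity equation onto $\GJ$ and using the definition \eqref{eq:aj} of $\AJ$, one obtains an explicit formula for $\xsoly$ in terms of $x_0$, $w$, $\lambda$ and $s_I$, exactly of the form appearing in \eqref{eq:sol-small}, i.e. $\xsoly = x_0 + \AJ\Phi^* w - \lambda \AJ D_I s_I$. (This is the same computation that underlies Theorem~\ref{thm:small}; here I would simply reuse it, since the only thing used is $\xsoly\in\GJ$ together with $\Phi^*\Phi \xsoly$ matching the data term.)

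The second step is to plug this expression for $\xsoly$ back into the remaining, "inactive" part of the optimality condition, namely $\normi{p_J}\le 1$ where $p_J$ is forced by the equation $D_J p_J = -\tfrac1\lambda \Phi^*(\Phi\xsoly - y)|_{\text{restricted appropriately}} - D_I s_I$. Solving for $p_J$ modulo $\Ker D_J$ (the pseudoinverse $D_J^+$ enters here, and the freedom $u\in\Ker D_J$ is precisely what the $\min_{u\in\Ker D_J}$ in the definition of $\IC$ accounts for), one gets
\begin{equation*}
  p_J \in D_J^+\bigl( \Phi^*\Phi\AJ - \Id\bigr) D_I s_I \;-\; \tfrac1\lambda D_J^+ \BJ^{[J]} w \;+\; \Ker D_J ,
\end{equation*}
so that $\CJ^{[J]} s_I$ appears, up to the additive $\Ker D_J$ ambiguity and the noise term $\tfrac1\lambda \BJ^{[J]}w$. (I am taking on faith that $\BJ^{[J]}$ is defined earlier so that $D_J^+(\Phi^*\Phi\AJ-\Id)\Phi^* = \tfrac1\lambda$-times-the-right-thing; the paper must have introduced $\BJ^{[J]} := (\Phi^*\Phi\AJ - \Id)\Phi^*$ or similar, and I would cite that definition.) Taking $\linf$ norms, minimizing over $u\in\Ker D_J$, and using the reverse triangle inequality,
\begin{equation*}
  \normi{p_J} \;\ge\; \umin{u\in\Ker D_J}\normi{\CJ^{[J]} s_I - u} \;-\; \tfrac1\lambda \normi{\BJ^{[J]} w} \;=\; \IC(s) - \tfrac1\lambda\normi{\BJ^{[J]} w}.
\end{equation*}
By hypothesis $\tfrac1\lambda\normi{\BJ^{[J]} w} < \IC(s) - 1 = \IC(\signxx)-1$, whence $\normi{p_J} > 1$, contradicting feasibility of the subgradient. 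Therefore no such $\xsoly$ exists, which is the claim.

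The step I expect to be the main obstacle is the passage from "$\xsoly\in\GJ$ and the optimality condition" to the closed-form expression for $\xsoly$ and, in tandem, the correct bookkeeping of the $\Ker D_J$ coset when extracting $p_J$ via $D_J^+$. One must be careful that $\xsoly\in\GJ$ is genuinely forced (not assumed) — it follows from $\sign(D^*\xsoly)=s$ having cosupport exactly $J$, so $D_J^*\xsoly = 0$ — and that the matrix $\UJ^*\Phi^*\Phi\UJ$ is indeed invertible, which is exactly \eqref{eq:hj}. A secondary subtlety is that the subgradient $p$ need not be unique, so the argument must produce the bound $\normi{p_J} > 1$ for \emph{every} admissible $p$; this is automatic because the non-uniqueness of $p_J$ is precisely the $\Ker D_J$ translation that has already been minimized over in the definition of $\IC$. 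Everything else is routine linear algebra with pseudoinverses.
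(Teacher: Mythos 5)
Your proposal is correct and follows essentially the same route as the paper: assume sign consistency for contradiction, use the implicit formula for $\xsoly$ on $\GJ$ (the paper's Lemma~\ref{lem:sol}) together with the subgradient characterization modulo $\Ker D_J$ (Lemma~\ref{lem:soldeux}), and conclude via the reverse triangle inequality that $\normi{p_J} \geq \IC(\signxx) - \tfrac{1}{\lambda}\normi{\BJ^{[J]} w} > 1$, contradicting feasibility. Your guessed form of $\BJ^{[J]}$ matches the paper's $\BJ^{[J]} = D_J^+\Phi^*(\Phi\AJ\Phi^*-\Id)$ up to the $D_J^+$ factor (which you should not apply twice), and your remark that the $\Ker D_J$ coset is exactly absorbed by the minimization in $\IC$ is precisely how the paper handles the non-uniqueness of the subgradient.
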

In plain words, for signals $x_0$ with $\IC(\signxx) > 1$, the associated sign vector and \dsup cannot be identified by solving \lasso even with a small noise.


\subsection{Noiseless Identifiability} 
\label{sub:contrib-noiseless}

In the noiseless case, $w=0$, the criterion $\IC$ can be used to test identifiability.
A vector $x_0$ is said to be \emph{identifiable} if $x_0$ is the unique solution of (\lassoP{\Phi x_0}{0}).
We will prove the following theorem.
\begin{thm}\label{thm:noiseless}
  Let $x_0 \in \RR^N$ be a fixed vector of \dcosup $J$.
  Suppose that \eqref{eq:hj} holds and $\IC(\signxx) < 1$.
  Then $x_0$ is identifiable.
\end{thm}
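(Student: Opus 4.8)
The plan is to exhibit an explicit dual certificate at $x_0$ and then use it, together with \eqref{eq:hj}, to upgrade the conclusion from ``$x_0$ is a minimizer of \bp'' to ``$x_0$ is the unique minimizer''. Write $s = \signxx$, $I = \supp(D^* x_0)$ and $J = I^c$. Since $\normu{\cdot}$ is finite everywhere, the subdifferential chain rule gives $\partial(\normu{D^* \cdot})(x_0) = D\,\partial\normu{\cdot}(D^* x_0) = \enscond{D v}{v_I = s_I,\ \normi{v_J} \le 1}$, and the first-order optimality condition for \bp with $y = \Phi x_0$ reads: $x_0$ is a solution as soon as there exists $v$ with $v_I = s_I$, $\normi{v_J} \le 1$ and $D v \in \Im \Phi^*$. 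The candidate certificate is $p = \Phi^* \Phi \AJ D_I s_I \in \Im \Phi^*$ (with associated $\eta = \Phi \AJ D_I s_I$), which is exactly the ($\la$-independent) dual variable $\Phi^*(y - \Phi x)/\la$ evaluated at the vector \eqref{eq:sol-small} with $w = 0$.

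I would then verify that $p$ has the required form $p = D v$ with $v_I = s_I$ and, crucially, $\normi{v_J} < 1$. The key algebraic facts about $\AJ$, all immediate from \eqref{eq:aj} and the invertibility of $\UJ^* \Phi^* \Phi \UJ$ granted by \eqref{eq:hj}, are that $\AJ$ is self-adjoint, that $\Im \AJ \subseteq \GJ = \Ker D_J^*$, and that $\AJ \Phi^* \Phi z = z$ for every $z \in \GJ$. A short computation using these facts shows $\dotp{(\Phi^* \Phi \AJ - \Id) D_I s_I}{z} = 0$ for all $z \in \GJ$, hence $(\Phi^* \Phi \AJ - \Id) D_I s_I \in \GJ^\perp = \Im D_J$. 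Therefore the linear system $D_J v_J = (\Phi^* \Phi \AJ - \Id) D_I s_I$ is consistent, $\CJ^{[J]} s_I = D_J^+ (\Phi^* \Phi \AJ - \Id) D_I s_I$ is a particular solution, and the full solution set is $\enscond{\CJ^{[J]} s_I - u}{u \in \Ker D_J}$; for any such $v_J$ one has $D_I s_I + D_J v_J = p$, i.e. $p = D v$ with $v_I = s_I$. Choosing $u = u^\star$ attaining the minimum in Definition~\ref{def:asc} yields $\normi{v_J} = \IC(s) < 1$, so the optimality condition above holds and $x_0$ solves \bp.

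For uniqueness, let $x'$ be any solution of \bp and set $h = x' - x_0 \in \Ker \Phi$. Because $D_J^* x_0 = 0$ we have $D_J^* x' = D_J^* h$, hence
\begin{align*}
  \normu{D^* x'} = \normu{D_I^* x'} + \normu{D_J^* h} &\ge \dotp{s_I}{D_I^* x'} + \dotp{v_J}{D_J^* h} \\
  &= \normu{D_I^* x_0} + \dotp{D_I s_I + D_J v_J}{h} = \normu{D^* x_0} + \dotp{p}{h},
\end{align*}
and $\dotp{p}{h} = \dotp{\eta}{\Phi h} = 0$; moreover $\normi{v_J} < 1$ forces the inequality $\normu{D_J^* h} \ge \dotp{v_J}{D_J^* h}$ to be strict unless $D_J^* h = 0$. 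Since $x'$ and $x_0$ attain the same minimal value, all these inequalities must be equalities, so $D_J^* h = 0$, i.e. $h \in \GJ$; then $h \in \Ker \Phi \cap \GJ = \ens{0}$ by \eqref{eq:hj}, hence $x' = x_0$. I expect the second paragraph to be the main obstacle: showing that the natural certificate $p$ genuinely lies in $\Im D$ with the correct sign pattern on $I$ and controlled mass on $J$ is precisely where the identities for $\AJ$ and the exact definition of $\IC$ must be combined, whereas the existence and uniqueness parts are then routine convex analysis. (Alternatively one could pass to the limit $w = 0$, $\la \to 0^+$ in Theorem~\ref{thm:small}, but the direct certificate argument is cleaner and self-contained.)
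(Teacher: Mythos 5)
Your proof is correct, but it takes a genuinely different route from the paper's. The paper proves Theorem~\ref{thm:noiseless} by specializing Theorem~\ref{thm:small} to $w=0$ (Corollary~\ref{cor:unique}), letting $\lambda \to 0$ to show $x_0$ solves \bp (Lemma~\ref{lem:cv-constrained}), and then obtaining uniqueness by a perturbation trick: it forms $x_{(1)} = x_0 + \lambda \AJ D_I s_I$, notes that $x_0$ is the \emph{unique} solution of (\lassoP{y_1}{\lambda}) with $y_1 = \Phi x_{(1)}$, and deduces $\normu{D^* x_0} < \normu{D^* x_{(2)}}$ for any other feasible point. You instead build the dual certificate $p = \Phi^*\Phi\AJ D_I s_I = D_I s_I + D_J v_J$ directly, verify $(\Phi^*\Phi\AJ - \Id)D_I s_I \in \Im D_J$ via the self-adjointness of $\AJ$ and $\AJ\Phi^*\Phi|_{\GJ} = \Id$, identify $\min_{u \in \Ker D_J}\normi{\CJ^{[J]}s_I - u} = \IC(s)$ as exactly the $\linf$-mass of the certificate on $J$, and close with the classical argument that $\normi{v_J}<1$ forces $D_J^* h = 0$ for any minimizer difference $h \in \Ker\Phi$, whence \eqref{eq:hj} gives $h=0$. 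All steps check out (your certificate is in fact the same object the paper constructs in the Appendix to prove Proposition~\ref{prop:source}, and your uniqueness step is the constrained-problem analogue of Lemma~\ref{lem:first-uniqueness}). Your route is more self-contained — it avoids the constants $c_J$, $\tilde c_J$ and the machinery of Theorem~\ref{thm:small} entirely, and it makes transparent why $\IC$ is the right quantity — while the paper's route buys economy given that Theorem~\ref{thm:small} is already established, and it exhibits the continuity of the solution path as $\lambda \to 0$, which connects the noiseless statement to the small-noise one.
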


The conclusions of Proposition~\ref{prop:icsupnoise} remain valid even in the noiseless case.
\begin{cor}\label{cor:icsup}
  Let $x_0 \in \RR^N$ be a fixed vector of \dcosup $J$.
  Suppose that $(H_J)$ holds and $\IC(\signxx) > 1$.
  Then for any $\lambda > 0$ and any solution $\xsoly$ of $(\lassoP{\Phi x_0}{\lambda})$,
  \begin{equation*}
   	\signxx \neq \sign(D^* \xsoly) .
   \end{equation*}
\end{cor}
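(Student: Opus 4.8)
The plan is to reduce the statement to Proposition~\ref{prop:icsupnoise} by observing that the noiseless case $w=0$ is a (degenerate) instance of the noisy situation treated there. Indeed, fix $\lambda>0$ and consider the problem $(\lassoP{\Phi x_0}{\lambda})$, which is exactly \lasso with data $y=\Phi x_0 = \Phi x_0 + w$ for the choice $w=0$. Proposition~\ref{prop:icsupnoise} requires the hypotheses $(H_J)$ and $\IC(\signxx)>1$, both of which are assumed here, together with the smallness condition $\tfrac{1}{\lambda}\normi{\BJ^{[J]} w} < \IC(\signxx)-1$. With $w=0$ the left-hand side is $0$, and since $\IC(\signxx)>1$ the right-hand side is strictly positive, so the condition holds trivially for every $\lambda>0$. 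Hence Proposition~\ref{prop:icsupnoise} applies and yields $\signxx \neq \sign(D^*\xsoly)$ for any solution $\xsoly$ of $(\lassoP{\Phi x_0}{\lambda})$, which is precisely the claim.

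The only point that requires a moment of care is that Proposition~\ref{prop:icsupnoise} is stated for $y=\Phi x_0+w$ with, implicitly, a genuine noise vector $w$; one should check that nothing in its proof forces $w\neq 0$ or $w$ to lie outside $\Ker D^*$ or any such restriction. Inspecting the statement, the conclusion depends on $w$ only through the scalar quantity $\normi{\BJ^{[J]} w}$, and the argument is purely a strict-inequality bound, so the specialization $w=0$ is legitimate and in fact the most favorable case. If one prefers a self-contained argument avoiding even this mild appeal to continuity of the hypotheses in $w$, one can simply repeat the proof of Proposition~\ref{prop:icsupnoise} verbatim with $w=0$ throughout; the bookkeeping is identical and strictly simpler.

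I expect the main (and essentially only) obstacle to be purely expository: making sure the reader sees that $(\lassoP{\Phi x_0}{\lambda})$ is the $w=0$ specialization of \lasso with $y=\Phi x_0 + w$, and that the hypothesis $\IC(\signxx)>1$ is exactly what makes the smallness condition on $w$ vacuous. There is no new analytic content beyond Proposition~\ref{prop:icsupnoise}; the corollary is a one-line consequence once the specialization is made explicit. Accordingly, the written proof will consist of: (i) identifying the data as $y=\Phi x_0+0$; (ii) verifying $(H_J)$ and $\IC(\signxx)>1$ hold by hypothesis; (iii) noting $\tfrac{1}{\lambda}\normi{\BJ^{[J]}\cdot 0} = 0 < \IC(\signxx)-1$ for all $\lambda>0$; and (iv) invoking Proposition~\ref{prop:icsupnoise} to conclude $\signxx \neq \sign(D^*\xsoly)$.
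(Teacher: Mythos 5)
Your proposal is correct and matches the paper's own (implicit) argument: the paper introduces the corollary with the sentence ``The conclusions of Proposition~\ref{prop:icsupnoise} remain valid even in the noiseless case,'' i.e.\ it is obtained precisely by specializing Proposition~\ref{prop:icsupnoise} to $w=0$, where the smallness condition $\tfrac{1}{\lambda}\normi{\BJ^{[J]} w} < \IC(\signxx)-1$ is vacuous because $\IC(\signxx)>1$. Nothing further is needed.
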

When $\IC(\signxx) = 1$, Proposition~\ref{prop:icsupnoise} and Corollary~\ref{cor:icsup} do not allow to conclude. In Section~\ref{sub:tv}, a family of signals $x_0$ is built such that $\IC(\signxx) = 1$, and where we show that depending on the noise structure, recovery can be possible or not.


\subsection{Robustness to Bounded Noise} 
\label{sub:contrib-noise}

Let us now turn to robustness to an arbitrary bounded noise. To this end, we introduce the following criterion which is a strenghthned version of the $\IC$ criterion.
\begin{defn}\label{def:arc}
  The analysis \emph{Recovery Criterion (RC)} of $I \subset \ens{1,\dots,P}$ is defined as
  \begin{equation*}
    \ARC(I) = \umax{\substack{p_I \in \RR^{\abs{I}}\\ \normi{p_I} \leq 1}} \umin{u \in \Ker D_J} \normi{\CJ^{[J]} p_I - u} .
  \end{equation*}
\end{defn}
It is clear that if $I$ is the \dsup of $x_0$, $\ARC(I) < 1$ implies $\IC(\signxx) < 1$. Moreover, $\ARC$ depends solely on the \dsup while $\IC$ relies both on the \dsup and the sign vector $\signxx$.

In Theorem~\ref{thm:small}, the assumption on $T$ plays a pivotal role: if $T$ is too small, there is no way to distinguish the small components of $D^* x_0$ from the noise. If no assumption is made on $T$, it turns out that one can nevertheless expect robustness to an arbitrary bounded noise if the parameter $\lambda$ is large enough. In this case, solving \lasso allows to recover a unique vector which lives in the same $\GJ$ as the unknown signal $x_0$, and whose $\ldeux$ distance from $x_0$ is within a factor of the noise level.
\begin{thm}\label{thm:noise}
  Let $I$ be a fixed \dsup, $J = I^c$ its associated \dcosup. Let $y=\Phi x_0+w$.
  Suppose that \eqref{eq:hj} holds.
  If $\ARC(I) < 1$ and
  \begin{equation*}
    \lambda = \rho \norm{w}_2 \frac{c_J}{1 - \ARC(I)} \qwithq \rho > 1,
  \end{equation*}
  where
  \begin{equation*}
    c_J = \norm{D_J^+ \Phi^* (\Phi \AJ \Phi^* - \Id)}_{2, \infty} ,
  \end{equation*}
  then for every $x_0$ of \dsup $I$, problem \lasso has a unique solution $\xsoly$ whose \dsup is included in $I$ and $\norm{x_0 - \xsoly}_2 = O(\norm{w}_2)$.
  More precisely,
  \begin{equation*}
    \norm{x_0 - \xsoly}_2 \leq \norm{\AJ}_{2, 2} \norm{w}_2 \left( \norm{\Phi}_{2, 2} + \dfrac{\rho c_J}{1 - \ARC(I)} \norm{D_I}_{2, \infty} \right) \!.
  \end{equation*}
\end{thm}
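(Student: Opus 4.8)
The plan is to build an explicit candidate solution supported (in the $D$-sense) on $I$, certify its optimality via the subdifferential inclusion for \lasso, and then prove uniqueness using the strict version of that inclusion together with $(H_J)$. First I would restrict the problem to the cospace $\GJ=\Ker D_J^*$: any $x$ with \dcosup containing $J$ can be written $x=\UJ\beta$, and on $\GJ$ the data term $\frac12\norm{y-\Phi x}_2^2$ is strictly convex thanks to $(H_J)$, so the restricted problem $\umin{x\in\GJ}\frac12\norm{y-\Phi x}_2^2+\lambda\normu{D_I^* x}$ has a unique minimizer $\xsolyp$. This $\xsolyp$ satisfies the normal equation $\UJ^*\Phi^*(\Phi\xsolyp-y)+\lambda\,\UJ^*D_I\,\eta_I=0$ for some $\eta_I\in\partial\normu{\cdot}(D_I^*\xsolyp)$, which using the definition \eqref{eq:aj} of $\AJ$ gives the closed form $\xsolyp=\AJ\Phi^* y-\lambda\AJ D_I\eta_I$, hence $\xsolyp-x_0=\AJ\Phi^* w-\lambda\AJ D_I\eta_I$ (using $x_0\in\GJ$ and $\Phi\AJ\Phi^* x_0$... more precisely $\AJ\Phi^*\Phi x_0=x_0$). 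The $\ldeux$ bound then follows immediately from the triangle inequality, $\normi{\eta_I}\le 1$, the value of $\lambda$, and the operator-norm bounds on $\AJ$, $\Phi$, $D_I$; this is the routine part.

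The substantive step is upgrading $\xsolyp$ from a minimizer of the $\GJ$-restricted problem to \emph{the} unique minimizer of the full problem \lasso over $\RR^N$. By convex duality / the first-order optimality conditions, it suffices to exhibit a dual certificate: a vector $\eta\in\RR^P$ with $\Phi^*(\Phi\xsolyp-y)=-\lambda D\eta$, $\eta_I=\sign(D_I^*\xsolyp)$ (or more generally $\eta_I\in\partial|\cdot|$ at $D_I^*\xsolyp$), and crucially the strict inequality $\normi{\eta_J}<1$; strictness of the off-support coordinates, combined with $(H_J)$, yields uniqueness by the standard argument (any other minimizer must share the same image under $\Phi$ and the same value of $\normu{D^*\cdot}$, forcing it into $\GJ$, where strict convexity finishes). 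To construct $\eta_J$ I would solve $D_J\eta_J = -\tfrac1\lambda\Phi^*(\Phi\xsolyp-y)-D_I\eta_I$; a particular solution is $\eta_J = D_J^+\big(\tfrac1\lambda\Phi^*(y-\Phi\xsolyp)-D_I\eta_I\big)$ and the general solution adds any $u\in\Ker D_J$. Substituting the closed form of $\xsolyp$ and simplifying with $\Phi\AJ\Phi^*$, the bracket becomes $D_J^+\Phi^*(\Phi\AJ\Phi^*-\Id)\tfrac{w}{\lambda} - \CJ^{[J]}\eta_I + u$ (here $\CJ^{[J]}=D_J^+(\Phi^*\Phi\AJ-\Id)D_I$ and one checks the $w$-term is exactly $\tfrac1\lambda D_J^+\Phi^*(\Phi\AJ\Phi^*-\Id)w$, whose $\linf$-norm is at most $\tfrac{c_J}{\lambda}\norm{w}_2$).

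Finally I would bound $\normi{\eta_J}$. Choosing $u$ to be the minimizer in the definition of $\IC(\eta_I)$ (equivalently bounding via $\ARC(I)$, which is the worst case over all $\eta_I$ with $\normi{\eta_I}\le1$), we get
\begin{equation*}
  \normi{\eta_J}\;\le\;\frac{c_J}{\lambda}\norm{w}_2\;+\;\ARC(I)\;=\;\frac{1-\ARC(I)}{\rho}\;+\;\ARC(I)\;<\;1,
\end{equation*}
using the prescribed $\lambda=\rho\norm{w}_2 c_J/(1-\ARC(I))$ with $\rho>1$. This gives the strict certificate, hence uniqueness, completing the proof. The main obstacle I anticipate is purely bookkeeping: carefully verifying the algebraic identities relating $\Phi^*(\Phi\xsolyp-y)$, $\AJ$, and $\CJ^{[J]}$ (in particular that $\AJ\Phi^*\Phi$ acts as the identity on $\GJ$ and that the $w$-dependent term collapses to $D_J^+\Phi^*(\Phi\AJ\Phi^*-\Id)w$), and making sure the subgradient at the on-support coordinates is consistent with $\sign(D_I^*\xsolyp)$ — though for the uniqueness-and-robustness statement it is enough to keep $\eta_I$ as an arbitrary element of the relevant subdifferential with $\normi{\eta_I}\le1$, sidestepping any claim about exact sign recovery (which is not asserted here, only $\supp(D^*\xsoly)\subseteq I$).
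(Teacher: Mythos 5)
Your proposal is correct and follows essentially the same route as the paper's proof: restrict to $\GJ$, obtain the closed form via $\AJ$, extend the restricted optimality certificate to a full one whose off-support part is bounded by $\ARC(I) + c_J\norm{w}_2/\lambda < 1$, and conclude uniqueness from the strict certificate together with \eqref{eq:hj}. The one point you defer to bookkeeping---that the residual $\Phi^*(\Phi \xsolyp - y) + \lambda D_I \eta_I$ actually lies in $\Im D_J$ so that $D_J^+$ produces a genuine solution of the certificate equation---follows from the restricted normal equation $\UJ^*(\cdot)=0$ and $\GJ^\bot = \Im D_J$, exactly as in the paper.
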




\section{Related Works} 
\label{sec:related}

\subsection{Previous Works on Synthesis Identifiability and Robustness} 
\label{sub:rw-tropp}

There is an extensive literature on guarantees for identifiability and robustness to noise of sparse synthesis regularization, i.e. Lasso in \eqref{eq:lasso-s}.
In \cite{fuchs2004on-sp}, Fuchs introduced a synthesis identifiability criterion $\IC_S$ which is a specialization of our $\IC$ (see Definition \ref{def:asc}) to the case where $D=\Id$. This condition also known as the irrepresentable condition in the statistical literature.
\begin{defn}
  Let $s \in \{-1, 0, +1\}^P$, $I$ its support and $J$ its cosupport.
  We suppose $\Psi_I$ is full rank.
  The criterion $\IC_S$ of a sign vector $s$ associated to a support $I$ is defined as
  \begin{equation*}
    \IC_S(s) = \normi{\Omega^S s_I} \qwhereq \Omega^S = \Psi_J^* \Psi_I^{+,*} .
  \end{equation*}
\end{defn}
Let us point out that the full rank assumption on $\Psi_I$ is a particularization of \eqref{eq:hj} to the synthesis prior case.

The following result is proved in \cite{fuchs2004on-sp}. We restate it here for completeness.
\begin{thm-nonum}[\cite{fuchs2004on-sp}]
  Let $\alpha_0 \in \RR^P$ be a fixed vector of support $I$.
  If $\Psi_I$ has full rank and $\IC_S(\sign(\alpha_0)) < 1$, then $\alpha_0$ is identifiable, i.e. it is the unique solution of \eqref{eq:lasso-s} for $y = \Psi \alpha_0$.
\end{thm-nonum}

The work of Tropp \cite{tropp2006just-,tropp2004greed} in the synthesis case developed a sufficient noise robustness condition built upon the so-called Exact Recovery Coefficient (ERC) of the support.
\begin{defn}
  The \emph{Exact Recovery Coefficient} (ERC) of $I \subset \ens{1 \dots P}$ is defined as
  \begin{equation*}
    \mathbf{ERC}(I) = \norm{\Omega^S}_{\infty, \infty} ,
  \end{equation*}
\end{defn}
Note again that while $\IC_S(s)$ depends both on the sign and the support, $\mathbf{ERC}$ depends only on the support and we have the inequality $\IC_S(s) \leq \mathbf{ERC}(I)$.

It is proved in \cite{tropp2006just-} that $\mathbf{ERC}(I) < 1$ is a sufficient condition for partial support recovery and $\ldeux$-consistency by solving the Lasso.
\begin{thm-nonum}[\cite{tropp2006just-}]
  Let $I$ be a fixed support.
  Suppose that $\Psi_I$ has full rank.
  If $\mathbf{ERC}(I) < 1$ and $\lambda$ large enough, then for every $\alpha_0$ of support $I$, problem \eqref{eq:lasso-s} with $y = \Psi \alpha_0 + w$ has a unique solution $\alpha^\star$ whose support is included in $I$ and $\norm{\alpha_0 - \alpha^\star}_2 = O(\norm{w}_2)$.
\end{thm-nonum}

By nociting that when $D = \Id$, $\Ker D_J =\{0\}$, and by definition of the operator norm $\norm{\cdot}_{\infty, \infty}$, we easily conclude that our criteria $\IC$ and $\ARC$ are equivalent to $\IC_S$ and $\mathbf{ERC}$.
\begin{prop}\label{prop:eq-transf}
  If $D=\Id$, then $\IC(\signxx) = \IC_S(\signxx)$ and $\ARC(I) = \mathbf{ERC}(I)$.
\end{prop}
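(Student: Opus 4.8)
The plan is to verify the two claimed identities separately, reducing each to a direct unwinding of the definitions once $D=\Id$ is imposed. The key simplifications are: with $D=\Id$ we have $\Psi=\Phi$, $D_I = (\Id)_I$, $D_J=(\Id)_J$, so $D_J^+ = D_J^* = (\Id)_J^*$ (the restriction operator), and crucially $\Ker D_J = \Ker (\Id)_J^* = \{0\}$ since the columns of $(\Id)_J$ are distinct canonical basis vectors and hence linearly independent. Therefore the inner minimization over $u \in \Ker D_J$ in both $\IC$ and $\ARC$ collapses to $u=0$, leaving a plain $\linf$ norm.

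First I would treat the operator $\AJ$. With $D=\Id$, the cospace is $\GJ = \Ker (\Id)_J^* = \Span\{e_i : i \in I\}$, so a natural choice for the basis matrix is $\UJ = (\Id)_I$, i.e. the columns indexed by $I$. Then $\AJ = (\Id)_I\big((\Id)_I^* \Phi^* \Phi (\Id)_I\big)^{-1}(\Id)_I^* = (\Id)_I (\Psi_I^*\Psi_I)^{-1}(\Id)_I^*$, and assumption \eqref{eq:hj} specializes exactly to $\Psi_I$ being full rank, so the inverse exists. Next I would compute $\CJ^{[J]} = D_J^+(\Phi^*\Phi\AJ - \Id)D_I$. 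Expanding, $\Phi^*\Phi\AJ D_I = \Psi^*\Psi (\Id)_I(\Psi_I^*\Psi_I)^{-1}(\Id)_I^*(\Id)_I = \Psi^*\Psi_I(\Psi_I^*\Psi_I)^{-1} = \Psi^*\Psi_I^{+,*}$, using $(\Id)_I^*(\Id)_I = \Id_{|I|}$ and the identity $M(M^*M)^{-1} = M^{+,*}$ for full-rank $M$. Left-multiplying by $D_J^+ = (\Id)_J^*$ keeps only the rows indexed by $J$: the $\Id D_I$ term gives $(\Id)_J^*(\Id)_I = 0$ since $J\cap I=\emptyset$, and the first term gives $(\Id)_J^*\Psi^*\Psi_I^{+,*} = \Psi_J^*\Psi_I^{+,*} = \Omega^S$. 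Hence $\CJ^{[J]} = \Omega^S$ exactly.

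With these two facts in hand the conclusion is immediate. Since $\Ker D_J = \{0\}$, we get $\IC(\signxx) = \norm{\CJ^{[J]}(\signxx)_I}_\infty = \norm{\Omega^S(\signxx)_I}_\infty = \IC_S(\signxx)$. Likewise $\ARC(I) = \max_{\norm{p_I}_\infty \le 1}\norm{\CJ^{[J]}p_I}_\infty = \max_{\norm{p_I}_\infty\le 1}\norm{\Omega^S p_I}_\infty = \norm{\Omega^S}_{\infty,\infty} = \mathbf{ERC}(I)$, where the last equality is just the definition of the $(\infty,\infty)$ operator norm as a supremum over the $\linf$ unit ball. I would also note that the sign vector $\signxx$ referenced in the synthesis statement is, under $D=\Id$, simply $\sign(x_0)$ viewed as an element of $\{-1,0,+1\}^N$ with support $I = \supp(x_0)$, matching Fuchs's setting.

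The only mildly delicate point — and the one I would state carefully rather than the routine matrix algebra — is the identification $M(M^*M)^{-1} = M^{+,*}$ and, relatedly, that the particular choice $\UJ = (\Id)_I$ is legitimate: the definition of $\AJ$ is basis-independent (this is clear from the optimization characterization $\AJ u = \argmin_{D_J^* x = 0}\tfrac12\norm{\Phi x}^2 - \dotp{x}{u}$, whose solution does not reference any basis), so it suffices to exhibit one convenient basis, and the canonical one does the job. Everything else is bookkeeping with restriction/extension operators $(\Id)_I$, $(\Id)_J$ and the disjointness $I \cap J = \emptyset$.
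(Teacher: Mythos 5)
Your proof is correct and follows essentially the same route as the paper, which compresses the whole argument into one sentence (when $D=\Id$, $\Ker D_J=\{0\}$, and the definition of the $\infty,\infty$ operator norm does the rest); your explicit computation that $\CJ^{[J]}=\Omega^S$ via $\AJ=(\Id)_I(\Psi_I^*\Psi_I)^{-1}(\Id)_I^*$ simply supplies the bookkeeping the paper leaves implicit. The only slip is notational: you write $\Ker D_J=\Ker(\Id)_J^*$ where you mean $\Ker(\Id)_J$ (the kernel of the adjoint is $\GJ$, not $\{0\}$), but the justification you give --- linear independence of the columns of $(\Id)_J$ --- is the correct one, so nothing substantive is affected.
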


There are of course many other sufficient conditions in the literature which provably guarantee uniqueness, identifiability and noise robustness in the $\lun$-synthesis regularization case; see \cite{bruckstein2009sparse}~for a thorough review. Among the most popular we have coherence-based conditions and those based on the RIP which plays a central role in the compressed sensing theory \cite{candes2006robust,donoho2006compressed}.

In the inverse problems community, efforts have been undertaken to derive results of robustness to arbitrary bounded noise (so-called convergence rates), for $\lun$-synthesis regularization to solve ill-posed linear inverse problems. In the regularization theory, the source or range condition as well as a restricted invertibility condition on $\Phi$ are generally imposed, see e.g. \cite{Lorenz08,Grasmair08,Grasmair2011,grasmair2011necessary}, and \cite{Scherzer09} and references therein. For instance, the authors in \cite{grasmair2011necessary} have shown that a strengthened version of the source condition generalizing $\IC_S(s) < 1$ is a necessary and sufficient condition for noise robustness with the rate $O(\norm{w}_2)$. This source condition is detailed in \eqref{eq-source-cond} for the more general analysis setting. However, these results do not say anything about the sign and support recovery.



\subsection{Previous Works on Analysis Identifiability and Robustness} 
\label{sub:rw-nam}

It is only very recently that recovery and noise robustness theoretical guarantees of $\lun$-analysis sparse regularization have been investigated. The previous works that we are aware of are \cite{candes2010compr,nam2011the-c,grasmair2011linear,burger2004convergence,giryes2012iterative,peleg2012cosparse,NeedelWard12}. 

Taking a compressed sensing perspective with a generalization of the RIP (called D-RIP) on $\Phi$, and assuming that $D$ is a tight frame, the authors \cite{candes2010compr} prove that $\lun$-analysis regularization allow accurate and robust recovery from noisy measurements uniformly over {\em all} signals that are (even nearly) $D$-sparse. \cite{NeedelWard12} also give a provable guarantee of robust recovery for images from compressed measurements via total variation regularization. As usual, the RIP-based guarantees are uniform and the (D-)RIP is satisfied for Gaussian matrices and other random ensembles. This setting is thus quite far from ours.

The work of \cite{nam2011the-c} is much closer to ours. It studies noiseless identifiability using $\lzero$ and $\lun$ sparse analysis regularization.
Their result on $\lun$-analysis noiseless identifiability is the following whose proof is inspired from a generalization of the null space property to the $\lun$-analysis case.
\begin{thm-nonum}[\cite{nam2011the-c}]
  Let $M^*$ be a basis matrix of $\Ker \Phi$ and $I$ a fixed \dsup such that the matrix $D_J^* M^*$ has full rank.
  Let $x_0 \in \GJ$ be a fixed vector.
  If $\FNam(\sign(D^* x_0)) < 1$ and
  \begin{equation*}
    \FNam(s) = \normi{\beth_I s_I} \qwhereq \beth_I = (M D_J)^+ M D_I ,
  \end{equation*}
  then $x_0$ is identifiable.
\end{thm-nonum}
Note that $\FNam(s) < 1$ does not imply $\IC(s) < 1$ and neither the opposite. Moreover, unlike $\IC$, $\FNam$ does not reduce to $\IC_S$ in the $\lun$-synthesis case, see the discussion on their fundamental differences in \cite[Section~5.3]{nam2011the-c}.
In the noisy case, $\FNam(s)<1$ is not sufficient to guarantee stability of the \dsup and the sign vector to noise even to a small one.
More precisely, let $x_0$ be a fixed vector, and denote $s = \sign(D^* x_0)$ where $I$ is its \dsup and $y = \Phi x_0 + w$.
If $\FNam(s) < 1$ but $\IC(s) > 1$, then according to Proposition~\ref{prop:icsupnoise}, any solution $\xsoly$ of \lasso, for $\lambda$ close to zero, is such that the \dsup of \xsol is not included in $I$. Robustness guarantees of $\lun$-analysis regularization by an appropriate strengthening of the analysis equivalent of the null space property remains an open porblem. 

Turning to the inverse problems literature, some authors have established linear convergence rates. For instance, in \cite{burger2004convergence}, convergence (robustness) rates for convex regularizations $R$ have been derived with respect to the Bregman divergence under a source condition. The Bregman divergence measures the distance between the regularization term $\J$ and its affine approximation at the true solution.
Analysis-type regularizations where $D^*$ is not necessarily injective, such as the total variation, fall within the class of regularization functionals they considered. The author in \cite{grasmair2011linear} derived more general linear convergence rates for a large class of convex sparsity promoting regularization functionals $R$, including analysis-type ones, under a source condition and a suitable restricted injectivity condition on $\Phi$. The convergence was established with respect to the error in the solution measured in terms of the regularization functional. 
Specialized to the case of $\lun$-analysis regularization, this result reads.
\begin{thm-nonum}[\cite{grasmair2011linear}]
  Let $x_0 \in \RR^N$ of \dsup $I$ and $y=\Phi x_0 + w$ such that $\norm{w} = \epsilon$.
  Assume also that there exist $\alpha$ such that
  \eql{\label{eq-source-cond}
  	\alpha \in \partial \normu{\cdot}(D^* x_0)
	\qandq 
	D \alpha \in \Im \Phi^*
	}
	(source condition).
  Let $J \subseteq I^c$ such that $\normi{\alpha_J} < 1$. Suppose that \eqref{eq:hj} holds with such $J$.
  Then, for $\lambda$ proportional to $\epsilon$, there exists $C$ independent of $\epsilon$ such that
  \begin{equation*}
    \norm{D^*(\xsoly - x_0)}_2 \leq C \epsilon ~.
  \end{equation*}
\end{thm-nonum}
Interestingly, for $J=I^c$, if \eqref{eq:hj} is satisfied, $\IC(\sign(D^* x_0)) < 1$ implies that the source and restricted injectivity conditions stated in the previous theorem are in force. More precisely, the following holds.
\begin{prop}\label{prop:source}
  Let $x_0 \in \RR^N$ of \dcosup $J$ such that \eqref{eq:hj} holds and $\IC(\sign(D^* x_0)) < 1$.
  Then, the source and restricted injectivity conditions of Theorem ([35]) hold. The claimed convergence is therefore also valid.
\end{prop}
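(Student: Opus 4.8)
\emph{Proof idea.} Write $s = \sign(D^* x_0)$, so that $I = \supp(s)$ and $J = I^c$. The restricted injectivity condition demanded by the theorem of \cite{grasmair2011linear} recalled above is exactly \eqref{eq:hj} for the cosupport $J = I^c$, which is already part of the hypotheses; hence the plan is to exhibit a single dual vector $\alpha \in \RR^P$ realizing the source condition \eqref{eq-source-cond} and, in addition, satisfying the strict inequality $\normi{\alpha_J} < 1$ that the theorem also requires (with the choice $J = I^c$). Once such an $\alpha$ is produced, the convergence rate $\norm{D^*(\xsoly - x_0)}_2 \leq C \epsilon$ follows verbatim from that theorem.

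I would build $\alpha$ straight from the vector $\CJ^{[J]} s_I$ appearing in the definition of $\IC$. Let $u^\star \in \Ker D_J$ attain $\IC(s) = \umin{u \in \Ker D_J} \normi{\CJ^{[J]} s_I - u}$ (the minimum is attained since $\Ker D_J$ is finite-dimensional and $u \mapsto \normi{\CJ^{[J]} s_I - u}$ is continuous and coercive on it), and set $\alpha_I = s_I$ and $\alpha_J = \CJ^{[J]} s_I - u^\star$. By construction $\normi{\alpha_J} = \IC(s) < 1$, so $\alpha \in \partial \normu{\cdot}(D^* x_0)$ with a strict inequality off the support, and the only remaining point is $D\alpha \in \Im \Phi^*$.

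The key step, and the one I expect to need the most care, is the range identity
\[
  (\Phi^* \Phi \AJ - \Id) D_I s_I \in \Im D_J = \GJ^{\perp}.
\]
I would prove it by testing against an arbitrary $v \in \GJ$: writing $v = \UJ b$ and unfolding $\AJ = \UJ(\UJ^* \Phi^* \Phi \UJ)^{-1} \UJ^*$, the factor $\UJ^* \Phi^* \Phi \UJ$ cancels its inverse and one is left with $\dotp{v}{\Phi^* \Phi \AJ D_I s_I} = b^* \UJ^* D_I s_I = \dotp{v}{D_I s_I}$, hence $\dotp{v}{(\Phi^* \Phi \AJ - \Id) D_I s_I} = 0$. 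Since $D_J D_J^+$ is the orthogonal projector onto $\Im D_J$, the inclusion upgrades to the exact identity $(\Phi^* \Phi \AJ - \Id) D_I s_I = D_J \CJ^{[J]} s_I$, by definition of $\CJ^{[J]}$.

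With this in hand the verification is immediate: using $D_J u^\star = 0$,
\[
  D\alpha = D_I s_I + D_J(\CJ^{[J]} s_I - u^\star) = D_I s_I + (\Phi^* \Phi \AJ - \Id) D_I s_I = \Phi^*\big(\Phi \AJ D_I s_I\big) \in \Im \Phi^*,
\]
so \eqref{eq-source-cond} holds with this $\alpha$ (and certificate $\eta = \Phi \AJ D_I s_I$). Together with \eqref{eq:hj} for $J = I^c$, all hypotheses of the theorem of \cite{grasmair2011linear} are in force, which gives the claimed rate. Apart from the range identity above, the whole argument is bookkeeping with the definitions of $\AJ$, $\CJ^{[J]}$ and $\IC$.
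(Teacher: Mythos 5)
Your proposal is correct and follows essentially the same route as the paper: you build the same certificate $\alpha$ with $\alpha_I = s_I$ and $\alpha_J = \CJ^{[J]} s_I - \bar u$ for a minimizer $\bar u \in \Ker D_J$, use the same projection identity $D_J \CJ^{[J]} s_I = \tilde\CJ^{[J]} s_I$ (the paper obtains $\Im \tilde\CJ^{[J]} \subseteq \Im D_J$ from $\UJ^* \tilde\CJ^{[J]} = 0$, which is exactly your orthogonality test against $\GJ$), and arrive at the same source element $\eta = \Phi \AJ D_I s_I$. No gaps.
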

However, in none of these works in the inverse problem literature, robustness with respect to the $\ldeux$-norm, i.e. $\ldeux$-distance of the solution from the true one, was established for general $D$.  Of course, if $D^*$ were injective, $\ldeux$-robustness would follow immediately from \cite{grasmair2011linear}.
In addition, their results do not allow to conclude anything about the sign and \dsup recovery unless there is no noise.



\section{Examples} 
\label{sec:examples}

This section details algorithms to compute the criteria $\IC$ and $\ARC$, together with a detailed study of three $\lun$-analysis regularizations: total variation, that when $D$ is the shift-invariant Haar dictionary, and the Fused Lasso.
The source code used to produce the numerical results is available online at \url{github.com/svaiter/robust_sparse_analysis_regularization}.

\subsection{Computing Sparse Analysis Regularization} 
\label{sub:schemes}

It is not the main scope of this paper to give a comprehensive treatment of provably convergent minimization schemes that can be used to solve \lasso. We describe one possible efficient algorithm to do so which originates from the realm of nonsmooth convex optimization theory, and more precisely, proximal splitting.

In the case where $\Phi = \Id$ (denoising), \lasso is strictly (actually strongly) convex, and one can compute its unique solution $\xsoly$ by solving an equivalent Fenchel-Rockafellar dual problem \cite{chambolle2004algorithm}
\begin{equation*}
  \xsoly = y + D \alpha^\star 
  \qwhereq 
  \alpha^\star \in \uargmin{\norm{\alpha}_\infty \leq \lambda} \norm{y + D\alpha}_2^2 .
\end{equation*}
The dual problem can be solved using e.g. projected gradient descent or a multi-step accelerated version of it.

In the general case, we advocate the use of a primal-dual algorithm such as the relaxed Arrow-Hurwicz scheme recently revitalized in \cite{chambolle2011first}. This algorithm is designed to minimize the sum of two proper lowersemicontinuous convex functions, one of which is comped by a linear bounded operator. To put problem \lasso in a form amenable to apply this scheme, we can rewrite it as follows
\begin{equation*}
  \umin{x \in \RR^N} F(K(x))
  \qwhereq
  \choice{
    F: (g,u) \mapsto \frac{1}{2}\norm{y-g}_2^2 + \lambda \normu{u} \\
    K: x \mapsto ( \Phi x, D^* x ).
  }
\end{equation*}
The primal-dual algorithm requires the computation of the proximity operator of $F$ which is a separable and simple function, i.e. its proximity operator is easy to compute. Recall that the proximity operator $\Prox_{f}$ of a proper lower semicontinuous function and convex $f$ is defined as 
\begin{equation*}
  \forall x \in \RR^N, \quad \Prox_{f}(x) = \uargmin{z \in \RR^N} \dfrac{1}{2} \norm{z-x}_2^2 + f(z) .
\end{equation*}
Computing $\Prox_F$ involves applying a soft-thresholding (the $\lun$-part) and a diagonal Wiener filtering (the separable quadratic part).


\subsection{Computing the Criteria} 
\label{sub:crit-num}

In the case where $\Ker D_J \neq \ens{0}$, computing $\IC(\signxx)$ entails solving a convex minimization problem.
The latter can be cast as
\begin{align*}
  \IC(\signxx) = & \umin{u \in \RR^N} \norm{\CJ^{[J]} \signxx_I - u}_\infty \\
  & + \iota_{\Ker D_J}(u) ,
\end{align*}
where $\iota_{\Ker D_J}$ is the indicator function of $\Ker D_J$, i.e.
\begin{equation*}
  \iota_{\Ker D_J}(u) =
  \begin{cases}
    0 & \text{if } u \in \Ker D_J \\
    + \infty & \text{otherwise}.
  \end{cases}
\end{equation*}
The objective above is the sum of a translated $\linf$-norm and the indicator function of $\Ker D_J$. It can then be solved efficiently with the Douglas-Rachford splitting algorithm \cite{combettes2007douglas}. This will necessitate to compute the proximity operator of $\iota_{\Ker D_J}$ which is the orthogonal projector on $\Ker D_J$, and $\Prox_{\normi{\CJ^{[J]} \signxx_I - \cdot}}$ can be computed with standard proximal calculus rules knowing that
\begin{equation*}
  \Prox_{\gamma \normi{\cdot}}(x) = x - P_{\normu{\cdot}}\pa{\frac{x}{\gamma}} , \quad \forall \gamma > 0
\end{equation*}
where $P_{\normu{\cdot}}$ is the projection onto the unit $\lun$ ball.
This projector can be computed through sorting and soft-thresholding, see \cite{fadili2010tv} for details.\\

Unfortunately, computing \ARC~(see Definition~\ref{def:arc}) is not as easy since it necessitates to solve a difficult maxi-minimization optimization problem which is nonsmooth, and convex in both $u$ and $p_I$ (while concavity in $p_I$ would have been desirable).
A stronger criterion, which is easy to compute, is obtained by taking $u=0$ in $\Ker D_J$
\begin{equation*}
  \text{w\ARC}(I) = \norm{\CJ^{[J]}}_{\infty, \infty}.
\end{equation*}
One can easily see that for every vector $x_0$ with \dsup $I = \supp(D^* x)$, the following inequalities hold
\begin{equation*}
  \IC(\signxx) \leq \ARC(I) \leq \text{w\ARC}(I).
\end{equation*}
For many cases, $\text{w\ARC}(I)$ might be strictly greater than 1.
However, there are situations where $\text{w\ARC}(I) < 1$, such as when the associated cospace $\GJ$ is close to the whole space, i.e. high $D$-cosparsity or equivalently very small $D$-sparsity.


\subsection{Total Variation Denoising} 
\label{sub:tv}

Discrete 1-D total variation (TV) corresponds to taking $D = D_\text{DIF}$ as defined in \eqref{eq-d-diff}.
We recall that the TV union of subspaces model is formed by $\bigcup_k \Theta_k$ where $\Theta_k$ is the subspace of piecewise constant signals with $k-1$ steps.
We now define a subclass of such signals.
\begin{defn}
  A signal is said to \emph{contain a staircase subsignal} if there exists $i \in \ens{1\dots \abs{I}-1}$ such that
  \begin{equation*}
    \sign(D_I^* x)_i = \sign(D_I^* x)_{i+1} = \pm 1 .
  \end{equation*}
\end{defn}
Figure \ref{fig:num-tv} shows examples of signals with and without staircase subsignals.
\begin{figure}[bth]
  \centering
  \includegraphics[width=\linewidth]{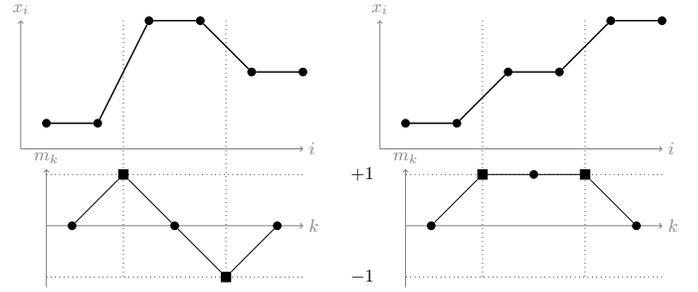}
  \caption{Top row: Two examples of signals $x$ having 2 jumps. Bottom row: Associated dual vector $m$.}\label{fig:num-tv}
\end{figure}

The following result will allow to characterize robustness of TV regularization when $\Phi=\Id$, i.e. TV denoising.

\begin{prop}
  We consider the case where $\Phi = \Id$.
  If $x_0$ does not contain a staircase subsignal, then $\IC(\sign(D^* x_0)) < 1$.
  Otherwise, $\IC(\sign(D^* x_0)) = 1$.
\end{prop}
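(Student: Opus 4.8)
The plan is to compute the criterion $\IC(\sign(D^* x_0))$ explicitly in the denoising case $\Phi = \Id$ by exploiting the very special structure of $D_\DIF$. When $\Phi = \Id$, the operator $\AJ$ is simply the orthogonal projector onto $\GJ = \Ker D_J^*$, so $\Phi^*\Phi\AJ - \Id = \AJ - \Id = -P_{\GJ^\perp}$, where $\GJ^\perp = \Im D_J$. Hence $\CJ^{[J]} = -D_J^+ P_{\Im D_J} D_I = -D_J^+ D_I$ (using $D_J^+ = D_J^+ P_{\Im D_J}$), and the criterion becomes $\IC(s) = \min_{u \in \Ker D_J} \normi{D_J^+ D_I s_I + u}$. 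The first step is therefore to make this reduction precise and to identify $\Ker D_J$ and the action of $D_J^+$ for the finite-difference dictionary.

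Second, I would use the combinatorial structure of $D_\DIF$. For 1-D finite differences, the cosupport $J$ corresponds to the indices where the signal is \emph{flat}; grouping consecutive flat indices partitions the signal into constant blocks, and $\GJ$ is exactly the space of signals constant on each block. The columns of $D_J$ are the finite-difference atoms supported strictly inside a block, and $\Ker D_J$ collects the degrees of freedom orthogonal to these (one per block plus boundary effects). The key explicit computation is to express $D_J^+ D_I s_I$: each column of $D_I$ is a finite-difference atom straddling a jump, and projecting it via $D_J^+$ spreads a value of the form $\pm(\text{length-dependent weights})$ across the flat coordinates of the two adjacent blocks. After minimizing over $u \in \Ker D_J$ — which amounts to subtracting, block by block, the freedom to shift the mean on each block — one is left with a vector whose $\linf$ norm I claim is governed precisely by whether two \emph{consecutive} jumps have the same sign. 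Concretely, at an "interior" flat coordinate between two jumps of opposite sign the contributions partially cancel and the residual is $< 1$ in magnitude, whereas when two consecutive jumps have the same sign the contributions add and the residual attains exactly $1$.

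Third, I would split into the two cases of the statement. If $x_0$ contains no staircase subsignal, consecutive nonzero entries of $D_I^* x_0$ alternate in sign (or there is at most one jump), so in every block the optimal $u$ can be chosen to strictly reduce all entries below $1$; a short argument (strict convexity of the relevant quantity, or a direct bound on the weights $k/(k+1)$-type fractions arising from block lengths) gives $\IC < 1$. If $x_0$ does contain a staircase subsignal, I would (i) exhibit a feasible $u$ showing $\IC \le 1$ — this follows from the general bound $\IC \le \normi{\CJ^{[J]} s_I}$ together with the observation that for finite differences $\normi{D_J^+ D_I s_I} \le 1$ always holds — and (ii) show $\IC \ge 1$ by producing, for the worst interior coordinate between the two equal-sign jumps, a lower bound on $\normi{\CJ^{[J]} s_I - u}$ valid for \emph{all} $u \in \Ker D_J$; this last step uses that $\Ker D_J$ cannot simultaneously adjust the mean of both blocks in a way that cancels the added contributions, forcing the residual to be at least $1$.

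The main obstacle I anticipate is the explicit linear algebra of step two: correctly computing $D_J^+ D_I s_I$ for the finite-difference dictionary with general (possibly length-one or boundary) blocks, and checking the sign bookkeeping that distinguishes cancellation from reinforcement. Once that closed form is in hand, both inequalities in each case are elementary. An alternative, possibly cleaner route is to avoid pseudoinverses entirely and work with the dual characterization: $\IC(s) < 1$ is equivalent to the existence of a dual certificate, i.e. a vector $\eta$ with $D^*\eta = s$ on $I$, $\normi{(D^*\eta)_J} < 1$, and $\eta \in \Im(\Phi^*\Phi\AJ) = \GJ$ (here, for denoising, $\eta$ constant on each flat block); constructing such a piecewise-constant, piecewise-monotone $\eta$ explicitly when the jumps alternate, and proving its nonexistence (with $\normi{\cdot} \ge 1$ forced) when two jumps agree in sign, is arguably the most transparent way to organize the argument.
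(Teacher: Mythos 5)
Your plan follows the same route as the paper's proof: with $\Phi=\Id$ the operator $\AJ$ is the orthogonal projector onto $\GJ$, so $\CJ^{[J]}=-D_J^+D_I$, and everything reduces to understanding the entries of $-D_J^+D_Is_I$ on the flat runs between consecutive jumps, where alternation of signs forces strict cancellation and repetition of signs forces saturation at $\pm 1$. Two remarks, one a misconception and one a genuine gap. The misconception: for $D=D_{\DIF}$ the $N-1$ columns $d_i=e_{i+1}-e_i$ are linearly independent, so $\Ker D_J=\{0\}$ for \emph{every} $J$ and $\IC(s)=\normi{\CJ^{[J]}s_I}$ with no minimization left to perform. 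Your description of $\Ker D_J$ as carrying ``one degree of freedom per block'' and your reliance, in the no-staircase case, on ``choosing $u$ to shift the mean on each block'' is really a description of $\GJ=\Ker D_J^*$; that freedom does not exist here, and the strict bound $<1$ must come from the entries of $\CJ^{[J]}s_I$ themselves. (Symmetrically, in the staircase case there is nothing to rule out: no $u\neq 0$ is available.)

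The gap: the decisive step --- the closed form of $D_J^+D_Is_I$ --- is precisely what you defer as ``the main obstacle I anticipate,'' so the argument is not actually carried out. The paper's device for closing it is short: since $D_J$ has full column rank, $\sigma=\CJ^{[J]}s_I$ is characterized by the normal equations $(D_J^*D_J)\sigma=-(D_J^*D_I)s_I$; assembling $m$ with $m_I=s_I$ and $m_J=\sigma$, these say that $(\Delta m)_j=0$ for all $j\in J$, where $\Delta=DD^*$ is the discrete Laplacian, with boundary conditions $m_i=s_i$ on $I$ and $m_0=m_N=0$. Hence $m$ is the piecewise-linear interpolant of the values $\pm 1$ at the jumps (and $0$ at the two ends), which immediately gives $\abs{m_k}<1$ at every coordinate strictly between two jumps of opposite sign (and on the boundary runs), and $m_k=\pm 1$ between two jumps of equal sign. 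Your alternative ``dual certificate'' route is the same object seen from the optimality conditions, but it requires the identical explicit construction, so it does not bypass the missing computation.
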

\begin{proof}
  Let $\xsoly$ be the unique solution of \lasso with \dcosup $J$ and $I = J^c$.
  Using Lemma \ref{lem:first-order}, there exists $\sigma \in \Sig(\xsoly) \subset \RR^{|J|}$. 
  Since $D_J^+ \AJ = 0$, we have $\CJ^{[J]} = -D_J^+ D_I$.
  We denote the vector $m$ defined as
  \begin{equation*}
    m :
    \choice{
      m_I = s_I = \sign(D_I^* x) \\
      m_J = \sigma = \CJ^{[J]} s_I .
    }
  \end{equation*}
  The vector $\sigma$ satisfies $(D_J^* D_J) \si = -(D_J^* D_I) s_I$.
  One can show that this implies that $m$ is the solution of a discrete Poisson equation
  \begin{equation*}
    \foralls j \in J, \quad (\Delta m)_j = 0
    \qandq
    \choice{
    \foralls i \in I, \: m_i = s_i, \\
    m_0 = m_{N} = 0.
    }
  \end{equation*}
  where $\Delta = D D^*$ is a discrete Laplacian operator.
  This implies that for $i_1<k<i_2$ where $i_1,i_2$ are consecutive indices of $I$, $m$ is obtained by linearly interpolating (see Figure \ref{fig:num-tv}) the values $m_{i_1}$ and $m_{i_2}$, i.e
  \begin{equation*}
    m_k = \rho m_{i_1} + (1-\rho) m_{i_2} \qwhereq \rho = \frac{k-i_1}{i_2-i_1} .
  \end{equation*}
  Hence, if $x_0$ does not contain a staircase subsignal, one has $\normi{\CJ^{[J]} s_I} < 1$.
  On the contrary, if there is $i_1$ such that $s_{i_1} = s_{i_2}$, where $i_1$ and $i_2$ are consecutive indices of $I$, then for every $i_1 < j < i_2, m_j = s_{i_1} = \pm 1$ which implies that $\IC(\sign(D^* x_0)) = 1$.
\end{proof}
This proposition together with Theorem~\ref{thm:small} shows that if a signal $x_0$ does not have a staircase subsignal, then TV denoising from $y=x_0+w$ is robust to a small noise. This means that if $w$ is small enough, for $\la$ proportional to the noise level, the TV denoised version of $y$ contains the same jumps as $x_0$. However, the presence of a staircase in a signal, i.e. $\IC(\sign(D^* x_0)) = 1$, does not comply with the assumptions of neither Theorem~\ref{thm:small} nor Proposition \ref{prop:icsupnoise}. This prevents us from drawing positive or negative robustness conclusions.\\

To gain a better understanding of the latter situation, we build an instructive family of signals for which the $\IC$ criterion saturates at 1. It will turn out that depending on the structure of the noise $w$, the \dsup of $x_0$ can be either stably identified or not.\\
For $N$ a multiple of 4, we split $\ens{1,\dots,N}$ into 4 sets
$l_k = \ens{(k-1)M+1, ..., kM}$ of cardinality $M = N/4$. Let $\mathbf{1}_{l_k}$ be the boxcar signal whose support is $l_k$ .
Consider the staircase signal $x_0 = - \mathbf{1}_{l_1} + \mathbf{1}_{l_4}$ degraded by a deterministic noise $w$ of the form $w = \epsilon (\mathbf{1}_{l_3} - \mathbf{1}_{l_2})$, where $\epsilon \in \RR$.
The observation vector $y = x_0 + w$ reads
\begin{equation*}
  y = - \mathbf{1}_{l_1} - \epsilon \mathbf{1}_{l_2} + \epsilon \mathbf{1}_{l_3} + \mathbf{1}_{l_4} ~.
\end{equation*}
Suppose that $\epsilon > 0$, then the solution $x_\lambda^\star$ of \lasso is
\begin{equation*}
  x_\lambda^\star = 
  \left(-1 + \frac{\lambda}{M} \right) \mathbf{1}_{l_1} 
  - \epsilon \mathbf{1}_{l_2} 
  + \epsilon \mathbf{1}_{l_3} 
  + \left(1 - \frac{\lambda}{M}\right) \mathbf{1}_{l_4} , 
\end{equation*}
if $0 \leq \lambda \leq \lambda_1 = M(1 - \epsilon)$, and
\begin{equation*}
  x_\lambda^\star = 
  \left(
  -\epsilon + \frac{\lambda - \lambda_1}{2M}
  \right) (\mathbf{1}_{l_1} + \mathbf{1}_{l_2})
  +
  \left(
  \epsilon - \frac{\lambda - \lambda_1}{2M}
  \right) (\mathbf{1}_{l_3} + \mathbf{1}_{l_4}) ,
\end{equation*}
if $\lambda_1 \leq \lambda \leq \lambda_2 = \lambda_1 + 2 \epsilon M$, and 0 if $\lambda > \lambda_2$.
Similarly, if $\epsilon < 0$, the solution $x_\lambda^\star$ reads
\begin{equation*}
  x_\lambda^\star = 
  \left(-1 + \frac{\lambda}{M} \right) \mathbf{1}_{l_1} 
  - \left( \epsilon + 2\frac{\lambda}{M}\right) (\mathbf{1}_{l_2}  - \mathbf{1}_{l_3} )
  + \left(1 - \frac{\lambda}{M}\right) \mathbf{1}_{l_4} , 
\end{equation*}
if $0 \leq \lambda \leq \bar\lambda_1 = - \epsilon \frac{M}{2}$, and
\begin{equation*}
  x_\lambda^\star = 
  \left(-1 + \frac{\lambda}{M} \right) \mathbf{1}_{l_1} 
  + \left(1 - \frac{\lambda}{M}\right) \mathbf{1}_{l_4} , 
\end{equation*}
if $\lambda_1 \leq \lambda \leq \bar\lambda_2 = M$, and 0 if $\lambda > \lambda_2$.
Figure \ref{fig:tv-path} displays plots of the the coordinates' path for both cases.
\begin{figure}[bth]
  \centering
  \includegraphics[width=\linewidth]{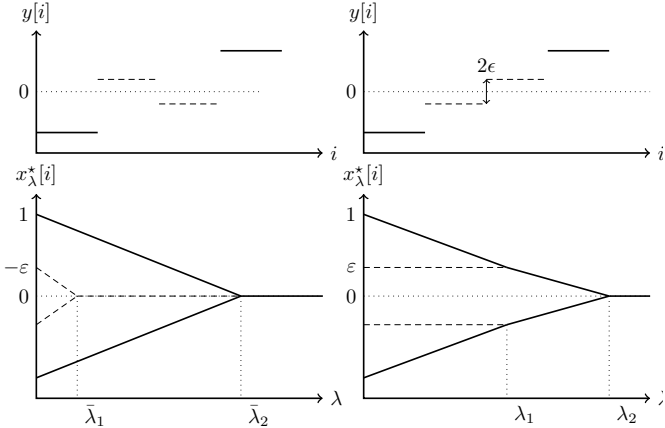}
  \caption{
  Top row: Signals $y$ for $\epsilon < 0$ (left) and $\epsilon > 0$ (right).
  Bottom row: Corresponding coordinates' path of $x_\lambda^\star$ as a function of $\lambda$.
  The solid lines correspond to the coordinates in $l_1$ and $l_4$, and the dashed ones to the coordinates in $l_2$ and $l_3$.
  }\label{fig:tv-path}
\end{figure}
It is worth poiniting out that when $\epsilon > 0$, the \dsup of $x_\lambda^\star$ is always different from that of $x_0$ whatever the choice of $\la$, whereas in the case $\epsilon < 0$, for any $\bar \lambda_1 \leq \lambda \leq \bar \lambda_2$, the \dsup of $x_\lambda^\star$ and sign of $D^* x_\lambda^\star$ are exactly those of $x_0$.


\subsection{Shift-Invariant Haar Deconvolution} 
\label{sub:comp}

Sparse analysis regularization using a 1-D shift invariant Haar dictionary is efficient to recover piecewise constant signals.
This dictionary is built using a set of scaled and dilated Haar filters
\begin{equation*}
  \psi_i^{(j)} = \frac{1}{2^{\tau(j+1)}}
  \begin{cases}
    +1 & \text{if } 0 \leq i < 2^j \\
    -1 & \text{if } -2^j \leq i < 0 \\
    0  & \text{otherwise} ,
  \end{cases}
\end{equation*}
where $\tau \geq 0$ is a normalization exponent. For $\tau = 1$, the dictionary is said to be \emph{unit-normed}.
For $\tau = 1/2$, it corresponds to a \emph{Parseval tight-frame}.
The action on a signal $x$ of the analysis operator corresponding to the translation invariant Haar dictionary $D_{H}$ is 
\begin{equation*}
  D_{H}^* x = \pa{\psi^{(j)} \star x}_{0 \leq j \leq J_{\text{max}}} ,
\end{equation*}
where $\star$ stands for the discrete convolution (with appropriate boundary conditions) and $J_{\text{max}} < \log_2(N)$.
The analysis regularization $\normu{D_{H}^* x}$ can also be written as the sum over scales of the TV semi-norms of filtered versions of the signal.
As such, it can be understood as a sort of multiscale total variation regularization.
Apart from a multiplicative factor, one recovers Total Variation when $J_{\text{max}} = 0$.

We consider a noiseless convolution setting where $\Phi$ is a circular convolution operator with a Gaussian kernel of standard deviation $\si$.
We first study the impact of $\sigma$ on the identifiability criterion $\IC$.
The blurred signal $x_\eta$ is a centered boxcar signal with a support of size $2\eta N$
\begin{equation*}
   x_{\eta} = \mathbf{1}_{\{\lfloor N/2-\eta N\rfloor,\dots,\lfloor N/2+\eta N\rfloor\}}, \quad \eta \in (0,1/2] ~.
\end{equation*}
Figure~\ref{fig:num-haar-deconvol} displays the evolution of $\IC(\sign(D_{H}^* x_0)$ as a function of $\sigma$, where we fixed $\eta = 0.2$.
\begin{figure}[!bth]
  \centering
  \includegraphics[width=\linewidth]{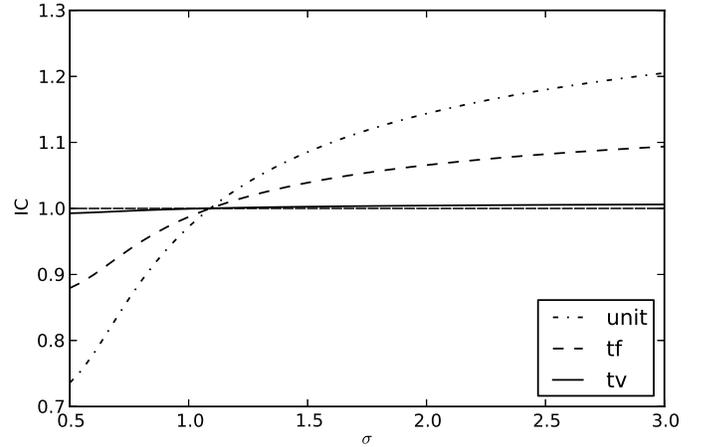}
  \caption{Behaviour of $\IC$ for a noiseless deconvolution scenario with a Gaussian blur and $\ell_1$-analsyis sparsity regularization in a shift invariant Haar dictionary with $J_{\max} = 4$. $\IC$ is plotted as a function of the Gaussian bluring kernel size $\sigma \in  [0.5, 3.0]$ for the total variation dictionary and the Haar wavelet dictionary with two normalization exponents $\tau$. Dash-dotted line: $\tau=1$ (unit-normed). Dashed line: $\tau=1/2$ (tight-frame). Solid line: total variation.}\label{fig:num-haar-deconvol}
\end{figure}
In the identifiability regime, $\IC(\sign(D_{H}^* x_0)$ appears smaller in the case of the unit-normed normalization. However, one should avoid to infer stronger conclusions since a detailed computation of the constants involved in Theorem~\ref{thm:small} would be necessary to completely and fairly compare the stability performance achieved with each of these three dictionaries.


\subsection{Fused Lasso Compressed Sensing} 
\label{sub:fused}

Fused Lasso was introduced in \cite{tibshirani2005sparsity}.
It corresponds to taking
\begin{equation*}
  D = 
  \begin{bmatrix}
    D_{\text{DIF}} & \epsilon \Id
  \end{bmatrix} ,
\end{equation*}
in \lasso, where $\epsilon > 0$.
The associated union of subspaces \eqref{eq:subspaces} is $\bigcup_k \Theta_k$, where $\Theta_k$ is the set of signals that are the sum of $k$ boxcars of disjoint supports, i.e
a signal $x \in \Theta_k$ can be written as
\begin{equation*}
  x = \sum_{i=1}^k \gamma_i \mathbf{1}_{[a_i, b_i]} ,
\end{equation*}
where $\gamma_i \in \RR$ and $a_i \leq b_i < a_{i+1}$.

We consider a noiseless compressed sensing setting and examine the behaviour of $\IC$ with respect to the sampling ratio $Q/N$ and the true signal properties. $\Phi$ is drawn from the standard Gaussian ensemble, i.e. $\Phi_{i,j} \sim_{i.i.d.} \Nn(0,1)$. The sampled signal $x_{\eta,\rho}$ is the superposition of two boxcars distant from each other by $2\rho N$ and each of support size $\eta N$
\begin{equation*}
  x_{\eta,\rho} = \mathbf{1}_{\{\lfloor(\frac{1}{2} - \eta - \rho)N\rfloor,\dots,\lfloor(\frac{1}{2} - \rho)N\rfloor\}} + \mathbf{1}_{\{\lfloor(\frac{1}{2} + \rho)N\rfloor,\dots,\lfloor(\frac{1}{2} + \eta + \rho)N\rfloor\}}.
\end{equation*}
In our simulations, we fixed $\rho = 0.1$.

Figure~\ref{fig:num-fused-transition} depicts the evolution of the empirical probability with respect to the sampling of $\Phi$ of the event $\IC < 1$ as a function of the sampling ratio $Q/N \in [0.5, 1]$ and the support size $\eta \in [0.025, 0.15]$.
This probability is computed from 1000 Monte-Carlo replications of the sampling of $\Phi$. With no surprise, one can clearly see that the probability increases as more measurements are collected. This probability profile also seems to be increasing as $\eta$ decreases, but this is likely to be a consequence of the choice of the Fused Lasso parameter $\epsilon$, and the conclusion may be different for other choices. 
\begin{figure}[!bth]
  \centering
  \includegraphics[width=\linewidth]{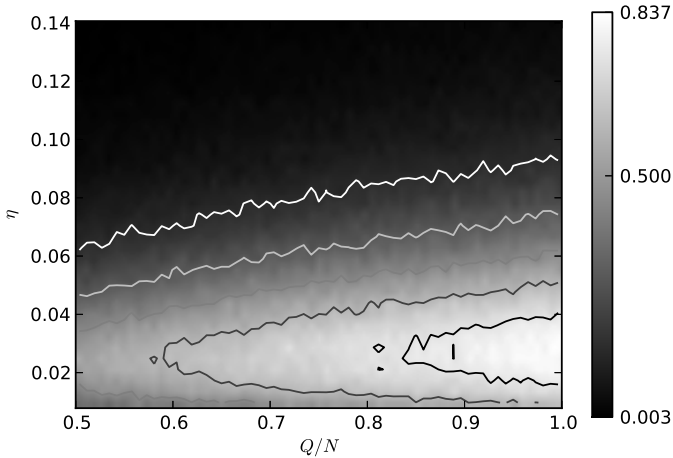}
  \caption{Behaviour of $\IC$ for a compressed sensing scenario matrix with a Gaussian measurement matrix and the Fused Lasso regularization. Empirical probability of the event $\IC < 1$ as a function of the sampling ratio $Q/N \in [0.5, 1]$ and the support size $\eta \in [0.025, 0.15]$.}\label{fig:num-fused-transition}
\end{figure}

This is indeed confirmed in our last experiment whose results are displayed in Figure~\ref{fig:num-fused-transition-eps}. It shows the evolution of the empirical probability of the event $\IC < 1$ as a function of the Fused Lasso parameter $\epsilon \in [1/N, 200/N]$ and the support size $\eta \in [0.025, 0.15]$. This probability is again computed from 1000 Monte-Carlo replications. Depending on the choice of $\epsilon$, the probability profile is not necessarily monotonic as a function of $\eta$. For large values (more weight on $\Id$ in the Fused Lasso dictionary), the probability decreases monotonically as $\eta$ increases which can be explained by the fact that higher $\eta$ corresponds to less sparse signals. As $\epsilon$ is lowered, higher weight is put on the TV regularization, and the behaviour is not anymore monotonic. Now, the probability reaches a peak at intermediate values of $\eta$ and then vanishes quickly. The peak probability also decreases with decreasing $\epsilon$. 
 
\begin{figure}[!bth]
  \centering
  \includegraphics[width=\linewidth]{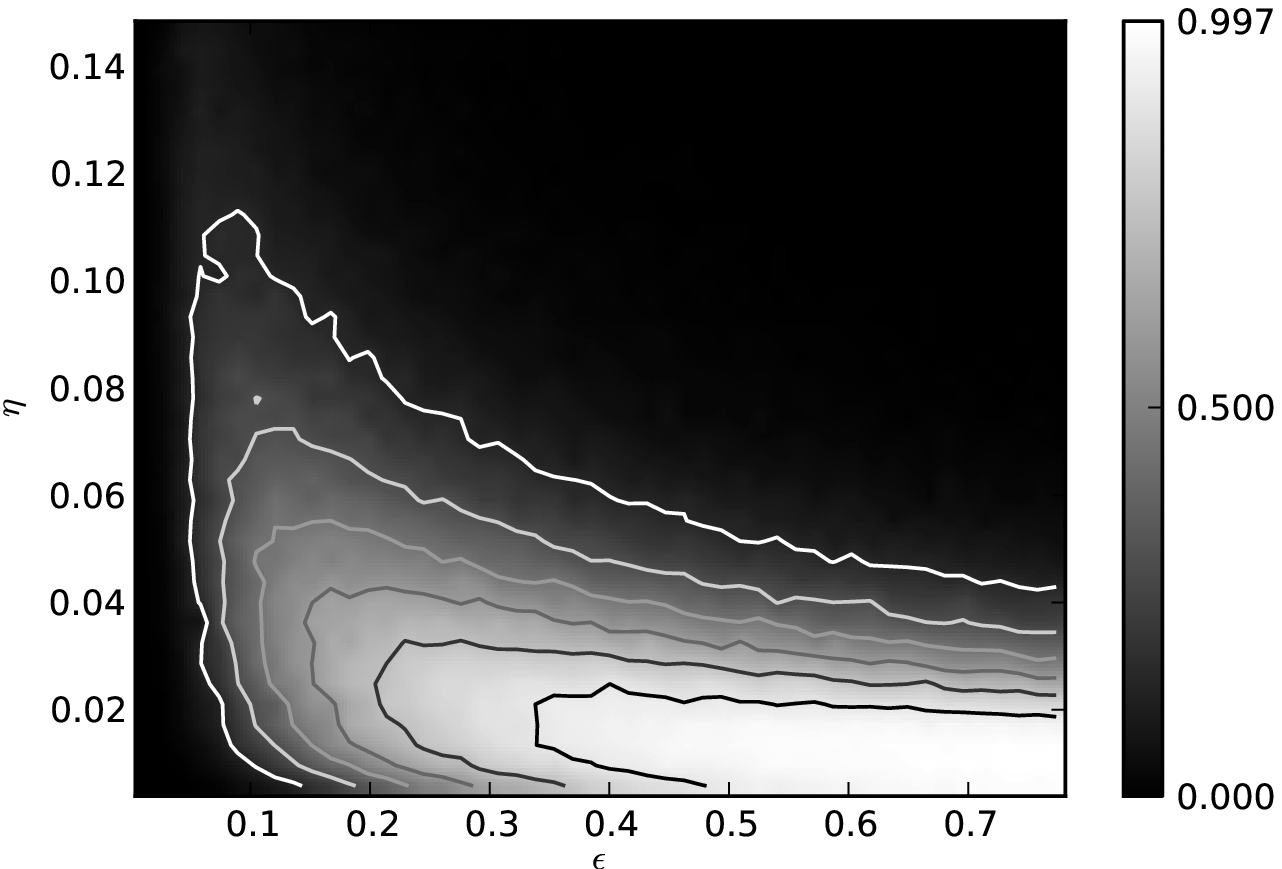}
  \caption{Behaviour of $\IC$ for a compressed sensing scenario matrix with a Gaussian measurement matrix and the Fused Lasso regularization. Empirical probability of the event $\IC < 1$ as a function of the parameter $\epsilon \in [1/N, 200/N]$ and the support size $\eta \in [0.025, 0.15]$.}\label{fig:num-fused-transition-eps}
\end{figure}



\section{Proofs} 
\label{sec:proofs}

This section details the proofs of our main results in Theorems~\ref{thm:small}-\ref{thm:noise}.
Throughout, we use the shorthand notation $\obj$ for the objective function in \lasso
\begin{equation*}
  \obj(x) = \minform .
\end{equation*}
We remind the reader that condition \eqref{eq:H0} is supposed to hold true in all our statements.

\subsection{Preparatory lemmata}
\label{subsec:preplem}
We first need some key lemmata that will be central in our proofs.

The first one gives the first order optimality conditions for the analysis variational problem \lasso.
\begin{lem}\label{lem:first-order}
  A vector $\xsoly$ is a solution of \lasso if, and only if, there exists $\sigma \in \RR^{\abs{J}}$, where $J$ is the \dcosup of $\xsoly$, such that
  \begin{equation}\label{eq:first-order}
    \sigma \in \Sig(\xsoly)
  \end{equation}
  \begin{multline}\label{eq:first-order-explicit}
    \Sig(x^\star) = \Big\{ \sigma \in \RR^{\abs{J}} \setminus  \Phi^*(\Phi x^\star - y) \\
     \quad + \lambda D_I s_I + \lambda D_J \sigma = 0  \\
                   \qandq \normi{\sigma} \leq 1 \Big\} 
  \end{multline}
  where $I = J^c$ is the \dsup of  $\xsoly$ and $s = \sign(D^* x^\star)$.
\end{lem}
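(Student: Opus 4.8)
The plan is to read Lemma~\ref{lem:first-order} as nothing more than the Fermat optimality condition for the convex program \lasso, made explicit through subdifferential calculus. Since $\obj$ is proper, continuous and convex on $\RR^N$, a vector $\xsoly$ is a global minimizer of $\obj$ if and only if $0 \in \partial\obj(\xsoly)$; both the ``only if'' and the ``if'' directions of the lemma will then follow at once, so there is no need for two separate arguments. First I would split $\obj = f + \lambda g$ with $f(x) = \frac{1}{2}\norm{y - \Phi x}_2^2$ and $g(x) = \normu{D^* x}$, observe that $f$ is differentiable everywhere with $\nabla f(x) = \Phi^*(\Phi x - y)$, and invoke the sum rule for subdifferentials (valid with no constraint qualification because $f$ is finite and smooth) to get $\partial\obj(x) = \Phi^*(\Phi x - y) + \lambda\,\partial g(x)$.

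Next I would compute $\partial g$. Writing $g = h \circ D^*$ with $h = \normu{\cdot} : \RR^P \to \RR$, I would apply the chain rule for a convex function precomposed with a linear map; since $h$ is finite-valued (hence continuous) on all of $\RR^P$, this holds unconditionally and gives $\partial g(x) = D\,\partial h(D^* x)$. I would then use the classical description of the subdifferential of the $\lun$ norm at $z = D^* \xsoly$: an element $\eta$ belongs to $\partial h(D^* \xsoly)$ if and only if $\eta_i = \sign(z_i)$ for every $i$ in the \dsup $I = \supp(D^* \xsoly)$ and $\abs{\eta_i} \leq 1$ for every $i$ in the \dcosup $J = I^c$; equivalently, with $s = \sign(D^* \xsoly)$, $\eta_I = s_I$ and $\normi{\eta_J} \leq 1$. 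Substituting $D\eta = D_I \eta_I + D_J \eta_J = D_I s_I + D_J \eta_J$ and renaming $\sigma = \eta_J \in \RR^{\abs{J}}$, the condition $0 \in \partial\obj(\xsoly)$ becomes precisely: there exists $\sigma$ with $\normi{\sigma} \leq 1$ and $\Phi^*(\Phi\xsoly - y) + \lambda D_I s_I + \lambda D_J \sigma = 0$, that is $\sigma \in \Sig(\xsoly)$ as defined in \eqref{eq:first-order-explicit}. This is exactly the asserted equivalence.

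The only point requiring any care — and the closest thing to an obstacle — is justifying the two calculus rules (the sum rule and the linear-precomposition chain rule) without a constraint qualification. This is clean here precisely because $f$ is differentiable on the whole space and $h$ has full domain, so I would simply cite the standard facts from convex analysis (e.g.\ \cite{rockafellar1970convex}). Everything else is routine, and in particular the statement with $I = J^c$ and $s = \sign(D^* \xsoly)$ matching the \dcosup of $\xsoly$ is automatic since $I$, $J$ and $s$ are read off from $\xsoly$ itself.
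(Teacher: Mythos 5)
Your proposal is correct and follows essentially the same route as the paper: Fermat's rule $0 \in \partial\obj(\xsoly)$, the subdifferential sum and linear-precomposition rules to get $\partial\obj(x) = \Phi^*(\Phi x - y) + \lambda D\,(\partial\normu{\cdot})(D^*x)$, the standard description of the $\lun$ subdifferential, and the identification $\sigma = u_J$. The only difference is that you spell out the constraint-qualification justifications that the paper subsumes under ``classical subdifferential calculus.''
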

\begin{proof}
   The subdifferential of a real-valued proper convex function $F : \RR^N \to \RR \cup \{\infty\}$ is denoted $\partial F$. From standard convex analysis, we recall the definition of the subdifferential of $F$ at a point $x$ in the domain of $F$ 
  \begin{equation*}
    \partial F(x) = \enscond{ g \in \RR^N \!}{\!\forall z \in \RR^N\!, F(z) \! \geq \! F(x) \!+\! \dotp{g}{z - x}} .
  \end{equation*}
  It is clear from this definition that $x^\star$ is a (global) minimizer of $F$ if, and only if, $0 \in \subd F(x)$.
  By classical subdifferential calculus, the subdifferential of $\obj$ at $x$ is the non-empty convex compact set
  \begin{equation*}
    \subd \obj(x) = \enscond{\Phi^* (\Phi x - y) + \lambda D u}{u \in U(x)} ,
  \end{equation*}
  where
  \begin{multline*}
    U(x) =  (\partial \norm{\cdot}_1)(D^* x)\\
    	 = \enscond{u \in \RR^N}{u_I = \sign(D^* x)_I \text{ and } \normi{u_J} \leq 1} .
  \end{multline*}
  where $I$ and $J$ are respectively the \dsup and \dcosup of $x$.
  Therefore $0 \in \subd \obj(x^\star)$ is equivalent to the existence of $u \in \RR^N$ such that $u_I = \sign(D^* x^\star)_I$ and $\normi{u_J} \leq 1$ satisfying
  \begin{equation*}
    \Phi^* (\Phi x^\star - y) + \lambda D u = 0 .
  \end{equation*}
  Letting $\sigma=u_J$, this is equivalent to $\sigma \in \Sig(x^\star)$.
\end{proof}

\medskip

The following lemma is a key to prove uniqueness statements. It characterizes the normal cone at zero to the subdifferential of $\obj$ at a minimizer $\xsoly$. By definition, this normal cone is
\begin{equation*}
    \mathcal{N}_{\partial \obj(\xsoly)}(0) = \enscond{z \in \RR^N}{\dotp{z}{d} \leq 0, \forall d \in \partial \obj(\xsoly)}.
\end{equation*}

\begin{lem}\label{lem:normal}
  Let $\xsoly$ be a solution of \lasso whose \dsup is $I^\star$.
  Suppose there exist $J \subseteq (I^\star)^c$ and $\sigma \in \Sig(\xsoly)$ with $\normi{\sigma_J} < 1$.
  Then,
  \begin{equation*}
    \mathcal{N}_{\partial \obj(\xsoly)}(0) \subseteq (\Im D_J)^\bot = \GJ.
  \end{equation*}
  Moreover, if $J$ is the \dcosup of $\xsoly$, then
  \begin{equation*}
    \mathcal{N}_{\partial \obj(\xsoly)}(0) = \GJ.
  \end{equation*}  
\end{lem}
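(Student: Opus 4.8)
The plan is to work directly from the description of the subdifferential $\partial\obj(\xsoly)$ obtained in Lemma~\ref{lem:first-order}, namely
$\partial\obj(\xsoly) = \{\Phi^*(\Phi\xsoly - y) + \lambda D u : u \in U(\xsoly)\}$,
where $u$ ranges over all vectors with $u_{I^\star} = \sign(D^*\xsoly)_{I^\star}$ and $\normi{u_{(I^\star)^c}} \leq 1$. The first observation I would make is that, since $\sigma \in \Sig(\xsoly)$, the point $d_0 := \Phi^*(\Phi\xsoly - y) + \lambda(D_{I^\star}\sign(D^*\xsoly)_{I^\star} + D_{(I^\star)^c}\sigma) = 0$ belongs to $\partial\obj(\xsoly)$. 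Hence $0$ is an interior-like point of $\partial\obj$ in directions spanned by $\Im D_J$: concretely, because $\normi{\sigma_J} < 1$, for any $j \in J$ we can perturb the $j$-th coordinate of $u$ slightly in either direction while staying in $U(\xsoly)$ (the constraint on coordinates outside $I^\star \cup J$ is untouched, and the sign constraints on $I^\star$ are preserved). This produces, for each $j\in J$, two points $0 \pm \varepsilon_j \lambda D_j \in \partial\obj(\xsoly)$ with $\varepsilon_j > 0$.

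From here the first inclusion is immediate: if $z \in \mathcal{N}_{\partial\obj(\xsoly)}(0)$, then $\dotp{z}{d} \leq 0$ for all $d \in \partial\obj(\xsoly)$; applying this to $d = \pm\varepsilon_j\lambda D_j$ gives $\dotp{z}{D_j} = 0$ for every $j \in J$, i.e.\ $z \perp \Im D_J$, which is exactly $z \in (\Im D_J)^\bot = \Ker D_J^* = \GJ$. (Here I use that $(\Im D_J)^\bot = \Ker D_J^*$ by the fundamental theorem of linear algebra, and that $\GJ = \Ker D_J^*$ by definition.)

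For the reverse inclusion under the hypothesis that $J$ is \emph{exactly} the \dcosup of $\xsoly$ (so $I^\star = J^c$), I would argue as follows. Take $z \in \GJ$, i.e.\ $D_J^* z = 0$; I must show $\dotp{z}{d} \leq 0$ for every $d \in \partial\obj(\xsoly)$. Writing a generic $d = \Phi^*(\Phi\xsoly - y) + \lambda D_{I^\star} s_{I^\star} + \lambda D_J v_J$ with $\normi{v_J}\leq 1$, and subtracting the representation of $0$ with multiplier $\sigma$, we get $d = \lambda D_J(v_J - \sigma)$. Then $\dotp{z}{d} = \lambda \dotp{D_J^* z}{v_J - \sigma} = 0$ since $D_J^* z = 0$. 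So in fact $\dotp{z}{d} = 0 \leq 0$ for all $d$, giving $z \in \mathcal{N}_{\partial\obj(\xsoly)}(0)$, and combined with the first inclusion, $\mathcal{N}_{\partial\obj(\xsoly)}(0) = \GJ$.

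The only subtle point — and the step I would be most careful about — is the legitimacy of the coordinatewise perturbations of $u$ used to generate the points $0 \pm \varepsilon_j\lambda D_j$: one must verify that the perturbed $u$ still lies in $U(\xsoly)$, which needs $\normi{\sigma_J} < 1$ strictly (so there is room to move) and needs $J \subseteq (I^\star)^c$ (so the coordinates being perturbed are genuinely in the "$\leq 1$" block, not the sign-constrained block). Both are exactly the hypotheses. Everything else is routine linear algebra, so I do not expect any real obstacle.
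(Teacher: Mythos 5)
Your proof is correct and follows essentially the same route as the paper's: exploit the strict slack $\normi{\sigma_J}<1$ to produce symmetric perturbations $\pm d$ of $0$ inside $\partial\obj(\xsoly)$ spanning $\Im D_J$, deduce $z\perp\Im D_J$ from the normal cone inequality, and for the equality case observe that the whole subdifferential lies in $\Im D_J$ so that every $z\in\GJ$ annihilates it. The only cosmetic difference is that you perturb one coordinate $j\in J$ at a time to get orthogonality to each column $D_j$ directly, whereas the paper perturbs $u_J$ over an $\linf$-ball and then identifies the image of that perturbation set with $\Im D_J$; both are valid and yield the same conclusion.
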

\begin{proof}
  Let $I = J^c$.
  We decompose $I$ such that $I = I^\star \cup J^\star$ where $J^\star = (I^\star)^c \cap I$.
  Since $\normi{\sigma_J} < 1$, it follows that $\bar u$ defined by
  \begin{equation*}
    \bar u :
    \begin{cases}
      \bar{u}_{I^\star} & = \sign(D_{I^\star}^* x^\star) \\
      \bar{u}_{J^\star} & = \sigma_{J^\star} \\
      \bar{u}_J & = \sigma_J
    \end{cases}
  \end{equation*}
  is such that $\normi{\bar{u}_J} < 1$ and therefore from Lemma~\ref{lem:first-order}
  \begin{equation*}
    \Phi^* (\Phi \xsoly - y) + \lambda D \bar u = 0.
  \end{equation*}

  Let $0 < \epsilon < 1$ such that $\normi{\sigma_J} = 1 - \epsilon$.
  Consider the set 
  \begin{equation*}
    \Uu = \enscond{u \in \RR^P}{\normi{u_J - \bar{u}_J} \leq \epsilon \qandq u_I = \bar{u}_I} .
  \end{equation*}
  For every $u \in \Uu$, we define
  \begin{equation*}
    d_u = \Phi^* (\Phi \xsoly - y) + \lambda D u ,
  \end{equation*}
  and we denote
  \begin{equation*}
    \Dd = \ens{d_u}_{u \in \Uu}.
  \end{equation*}
  We therefore have
  \begin{align*}
    d_u = \lambda D (u - \bar u) 
        &= \lambda D_I (u_I - \bar{u}_I) + \lambda D_J (u_J - \bar{u}_J) \\
	&= \lambda D_J (u_J - \bar{u}_J)
  \end{align*}
  since $u_I = \bar{u}_I$.

  Let $z \in \mathcal{N}_{\partial \obj(\xsoly)}(0)$ and $u \in \Uu$.
  By construction of $u$, we have that
  \begin{equation*}
    \normi{u_J} \leq \normi{u_J - \bar{u}_J} + \normi{\bar{u}_J} \leq 1 ,
  \end{equation*}
  and
  \begin{equation*}
    u_{I^\star} = \sign(D_{I^\star}^* \xsoly)
    \qandq
    \normi{u_{J^\star}} \leq 1.
  \end{equation*}
  Clearly, $d_u \in \subd \obj(\xsoly)$.
  In view of the definition of $\mathcal{N}_{\partial \obj(\xsoly)}(0)$, we know that
  \begin{equation*}
    \dotp{z}{d} \leq 0, \quad \forall d \in \subd \obj(\xsoly) . 
  \end{equation*}
  In particular,
  \begin{equation*}
    \dotp{z}{d_u} \leq 0, \quad \forall u \in \Uu.  
  \end{equation*}
  Now, observe that $\forall u \in \Uu$, $2\bar{u}-u \in \Uu$ and $d_{2\bar{u}-u} = - d_u$.
  Indeed, $(2\bar{u}-u)_I = 2\bar{u}_I - u_I = \bar{u}_I$ and
  \begin{equation*}
    \normi{(2\bar{u}-u)_J - \bar{u}_J} 
    = \normi{\bar{u}_J-u_J} \leq \epsilon .
  \end{equation*}
  Moreover,
  \begin{align*}
    d_{2\bar{u}-u} 
      & = \Phi^* (\Phi \xsoly - y) + \lambda D (2\bar{u}-u) \\
      & = \underbrace{\Phi^* (\Phi \xsoly - y) + \lambda D \bar{u}}_{=0} - \lambda D (u - \bar u) \\
      & = - d_u .
  \end{align*}
  This implies that
  \begin{equation*}
    \forall u \in \Uu, \quad \dotp{z}{d_u} \leq 0 \qandq \dotp{z}{-d_u} = \dotp{z}{d_{2\bar{u}-u}} \leq 0.
  \end{equation*}
  That is,
  \begin{equation*}
    \forall u \in \Uu, \quad \dotp{z}{d_u} = 0 .
  \end{equation*}
  
  Let $v \in \Im D_J \setminus \ens{0}$.
  Then there exist $\mu_v \in \RR^*$ and $\sigma_v \in \RR^{\abs{J}}$ such that
  \begin{equation*}
    \mu_v v = D_J \sigma_v \qandq \normi{\sigma_v} \leq \epsilon .
  \end{equation*}
  Let the vector $u$ defined as
  \begin{equation*}
    u :
    \begin{cases}
      u_I =& \bar{u}_I \\
      u_J =& \bar{u}_J + \sigma_v.
    \end{cases}
  \end{equation*}
  $u$ is by construction an element of $\Uu$ since $\normi{u_J - \bar{u}_J} = \normi{\sigma_v} \leq \epsilon$.
  Therefore, the associated vector $d_u$ is
  \begin{equation*}
    d_u = \lambda D_J (u_J - \bar{u}_J) = \lambda D_J \sigma_v = \dfrac{\lambda}{\mu_v} v,
  \end{equation*}
  i.e. $\Im(D_J) \subseteq \Im(\Dd)$. Since $\Im(\Dd) \subseteq \Im(D_J)$, we get $\Im(D_J) = \Im(\Dd)$.
  This together with the fact that $\dotp{z}{d_u} = 0$ imply
  \begin{equation*}
    \dotp{z}{v} = \dfrac{\mu_v}{\lambda} \dotp{z}{d_u} = 0.
  \end{equation*}
  We conclude that $\mathcal{N}_{\partial \obj(\xsoly)}(0) \subseteq (\Im D_J)^\bot = \GJ$.\\

  Suppose now that $J$ is the \dcosup of $x^\star$, i.e. $J=(I^\star)^c$.
  We prove that
  \begin{equation*}
    \mathcal{N}_{\partial \obj(\xsoly)}(0) = \GJ.
  \end{equation*}
  To this end, we show that $\subd \obj(\xsoly) \subseteq \Im D_J$.
  Indeed, let $d \in \subd \obj(\xsoly)$.
  We write $d = \Phi^*(\Phi \xsoly - y) + \lambda D u$ with $u_I = \sign(D_I^* x)_I$ and $\normi{u_J} \leq 1$.
  Since $0 \in \subd \obj(\xsoly)$, one has
  \begin{equation*}
    d = \lambda D(u - \bar u) = \lambda D_J (u_J - \bar{u}_J)
  \end{equation*}
  since $u_I = \bar{u}_I$.
  This implies that $(\Im D_J)^\bot \subseteq \mathcal{N}_{\partial \obj(\xsoly)}(0)$. In view of the assertion in the first part, we conclude. 
\end{proof}

\medskip

The following lemma gives a sufficient condition which guarantees that $\lasso$ has exactly one minimizer.
\begin{lem}\label{lem:first-uniqueness}
  Let $\xsoly$ be a vector of \dsup $I^\star$.
  Suppose that there exist $J \subseteq (I^\star)^c$ such that \eqref{eq:hj} holds and $\sigma \in \RR^{\abs{(I^\star)^c}}$ such that
  \begin{equation*}
    \sigma \in \Sig(\xsoly) \qandq \normi{\sigma_J} < 1.
  \end{equation*}
  Then, $\xsoly$ is the unique solution of \lasso.
\end{lem}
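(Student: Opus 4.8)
The plan is to exploit the convexity of $\obj$ together with the normal cone characterization from Lemma~\ref{lem:normal}. Since $\xsoly$ is a minimizer, the set of all minimizers of $\obj$ is a convex set, and any other minimizer $x'$ satisfies $x' - \xsoly \in \mathcal{N}_{\partial \obj(\xsoly)}(0)$; this is the standard fact that the difference of two minimizers of a convex function lies in the polar of the subdifferential at either one (equivalently, $\obj$ is affine along the segment joining them, forcing the direction to be in the normal cone to $\partial\obj(\xsoly)$ at $0$). First I would invoke Lemma~\ref{lem:normal}, whose hypotheses are exactly those assumed here (a cosupport $J \subseteq (I^\star)^c$ with $\sigma \in \Sig(\xsoly)$ and $\normi{\sigma_J} < 1$), to deduce that
\begin{equation*}
  \mathcal{N}_{\partial \obj(\xsoly)}(0) \subseteq \GJ = \Ker D_J^\star.
\end{equation*}
Hence any competing minimizer $x'$ obeys $h := x' - \xsoly \in \GJ$.

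Next I would use that both $\xsoly$ and $x' = \xsoly + h$ are minimizers of $\obj$, so $\obj$ is constant on the segment between them, which in particular forces the quadratic data term to be constant there; since the quadratic is convex, this gives $\Phi h = 0$, i.e. $h \in \Ker \Phi$. (Alternatively: if $\Phi h \neq 0$, strict convexity of $t \mapsto \frac12\norm{y - \Phi(\xsoly + th)}_2^2$ along that direction plus convexity of the $\lun$ term would make $\obj$ strictly convex on the segment, contradicting that both endpoints are minimizers.) Combining $h \in \GJ$ with $h \in \Ker \Phi$ and invoking the hypothesis \eqref{eq:hj}, namely $\Ker \Phi \cap \GJ = \{0\}$, yields $h = 0$, i.e. $x' = \xsoly$. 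Therefore $\xsoly$ is the unique solution of \lasso.

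The main obstacle, and the step deserving the most care, is the passage $h \in \mathcal{N}_{\partial \obj(\xsoly)}(0) \Rightarrow \Phi h = 0$: one must justify cleanly why the difference of two minimizers lies in the normal cone at $0$ to the subdifferential, and why this, together with minimality at both endpoints, kills the quadratic variation. The cleanest route is to write, for a second minimizer $x'$, the optimality inequality $\dotp{d}{x' - \xsoly} \geq 0$ for all $d \in \partial\obj(\xsoly)$ (since $0 \in \partial\obj(\xsoly)$ and $\obj(x') \geq \obj(\xsoly)$ combined with the subgradient inequality applied at $\xsoly$ in direction $x'-\xsoly$), and symmetrically $\dotp{d'}{\xsoly - x'} \geq 0$ for $d' \in \partial\obj(x')$; taking $d = 0$ shows nothing, so one instead argues $\obj(\tfrac{\xsoly+x'}{2}) \le \tfrac12(\obj(\xsoly)+\obj(x')) = \obj(\xsoly)$ with equality by minimality, and since $\normu{D^*\cdot}$ is convex and the quadratic strictly convex off $\Ker\Phi$, equality forces $\Phi\xsoly = \Phi x'$. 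Once this is in place the rest is immediate from \eqref{eq:hj}. Everything else is routine.
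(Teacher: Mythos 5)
Your proof is correct and follows essentially the same route as the paper's: Lemma~\ref{lem:normal} confines any candidate ambiguity direction to $\GJ$, and \eqref{eq:hj} then kills it — the only cosmetic difference is that the paper dispatches the directions inside $\GJ$ via strong convexity of the data term restricted to $\GJ$, whereas you use the equivalent fact that $\Phi x$ is constant over the minimizer set. (One small slip in your closing aside: the subgradient inequality at a second minimizer gives $\dotp{d}{x'-\xsoly} \leq 0$, not $\geq 0$, which is precisely membership in $\mathcal{N}_{\partial \obj(\xsoly)}(0)$ as your main argument correctly uses.)
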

\begin{proof}
  For notational convenience, we write $\obj$ as
  \begin{equation*}
    \obj(x) = q(x) + \lambda \norm{D^* x}_1 \qwhereq q(x) = \frac{1}{2} \norm{y - \Phi x}_2^2 .
  \end{equation*}
  Let $h \in \RR^N \setminus \ens{0}$.
  Two different cases occur:
  \begin{enumerate}
    \item If $h \not\in \GJ$, then using Lemma \ref{lem:normal}, $h \not\in \mathcal{N}_{\partial \obj(\xsoly)}(0)$. This negation means that $\dotp{d}{h} > 0$ for some $d \in \subd \obj(\xsoly)$, whence it follows immediately that
      \begin{equation*}
        \obj(\xsoly+h) \geq \obj(\xsoly) + \dotp{d}{h} > \obj(\xsoly) .
      \end{equation*}
    \item Let's turn to the case $h \in \GJ$.
      Since \eqref{eq:hj} holds, $q$, hence $ \obj$, is strongly convex on $\GJ$ with some modulus $c > 0$.
      Consequently, for any $v \in \subd_{\normu{D^* \cdot}}(\xsoly)$, we have
      \begin{align*}
        \obj(\xsoly \!+ h) &\geq \obj(\xsoly) +  \dotp{\nabla q(\xsoly)+\lambda v}{\!h} + \frac{c}{2}\norm{h}_2^2 \\
			   &>    \obj(\xsoly) +  \dotp{\nabla q(\xsoly)+\lambda v}{\!h} .
      \end{align*}
      $\xsoly$ is a minimizer if, and only if, $\exists v \in \subd_{\normu{D^* \cdot}}(\xsoly)$ such that
      \begin{equation*}
        \lambda v + \nabla q(\xsoly) = 0.
      \end{equation*}
      This yields,
      \begin{equation*}
        \obj(\xsoly + h) > \obj(\xsoly) .
      \end{equation*}
  \end{enumerate}
  Altogether, we have proved that for any $h \in \RR^N \setminus \ens{0}$, $\obj(\xsoly + h) > \obj(\xsoly)$, or equivalently that $\xsoly$ is the unique minimizer of \lasso.
\end{proof}

\medskip

The following lemma gives an implicit equation satisfied by any (non necessarily unique) minimizer $\xsoly$ of \lasso.
\begin{lem}\label{lem:sol}
 Let $\xsoly$ be a solution of $\lasso$.
 Let $I$ be the \dsup and $J$ the \dcosup of $\xsoly$ and $s = \sign(D^* \xsoly)$.
 We suppose that \eqref{eq:hj} holds.
 Then, $\xsoly$ satisfies
 \begin{equation}\label{eq:local-expression}
   \xsoly = \AJ \Phi^* y - \lambda \AJ D_I s_I .
 \end{equation}
\end{lem}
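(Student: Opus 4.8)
The starting point is the first-order optimality condition of Lemma~\ref{lem:first-order}: since $\xsoly$ solves \lasso with \dsup $I$, \dcosup $J$ and sign $s=\sign(D^*\xsoly)$, there is $\sigma\in\Sig(\xsoly)$ so that
\begin{equation*}
  \Phi^*(\Phi\xsoly - y) + \lambda D_I s_I + \lambda D_J \sigma = 0 .
\end{equation*}
The plan is to solve this equation for $\xsoly$ by projecting onto the cospace $\GJ=\Ker D_J^*$, on which, thanks to \eqref{eq:hj}, the operator $\Phi^*\Phi$ is invertible (this is exactly what makes $\AJ$ in \eqref{eq:aj} well defined).

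First I would record the geometric fact that $\xsoly\in\GJ$: by definition of the \dcosup, $D_J^*\xsoly=0$, so writing $\xsoly=\UJ\beta$ for the basis $\UJ$ of $\GJ$ used in \eqref{eq:aj}. Next, left-multiply the optimality equation by $\UJ^*$. The term $\UJ^* D_J\sigma$ vanishes because the columns of $\UJ$ lie in $\GJ=(\Im D_J)^\perp$, i.e. $\UJ^* D_J = (D_J^*\UJ)^* = 0$. This is the crucial cancellation that eliminates the unknown multiplier $\sigma$. We are left with
\begin{equation*}
  \UJ^*\Phi^*\Phi\UJ\beta - \UJ^*\Phi^* y + \lambda \UJ^* D_I s_I = 0 ,
\end{equation*}
and since $\UJ^*\Phi^*\Phi\UJ$ is invertible under \eqref{eq:hj},
\begin{equation*}
  \beta = (\UJ^*\Phi^*\Phi\UJ)^{-1}\UJ^*\bigl(\Phi^* y - \lambda D_I s_I\bigr) .
\end{equation*}
Multiplying back by $\UJ$ and using the definition $\AJ=\UJ(\UJ^*\Phi^*\Phi\UJ)^{-1}\UJ^*$ gives precisely $\xsoly=\AJ\Phi^* y - \lambda\AJ D_I s_I$, which is \eqref{eq:local-expression}.

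The only subtlety — and the step I would be most careful with — is checking that the reduction really is an equivalence rather than just a necessary condition: one must verify that $\UJ^*$ applied to the optimality equation loses no information about $\xsoly$, which holds because $\xsoly\in\GJ=\Im\UJ$, so $\xsoly$ is uniquely determined by $\beta=\UJ^+\xsoly$ (or, more directly, $\xsoly = \UJ\beta$ since the representation in the chosen basis is what we solved for). It is also worth noting that the identity \eqref{eq:local-expression} is independent of the particular choice of basis $\UJ$, since $\AJ$ itself is (it equals the operator defined by the variational problem $\AJ u = \uargmin{D_J^* x = 0}\tfrac12\norm{\Phi x}^2 - \dotp{x}{u}$), so no ambiguity arises. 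No genuine obstacle is expected here; the lemma is essentially a bookkeeping consequence of optimality plus the orthogonality $\Im\UJ = (\Im D_J)^\perp$.
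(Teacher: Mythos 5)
Your proposal is correct and follows essentially the same route as the paper: invoke the first-order optimality condition of Lemma~\ref{lem:first-order}, write $\xsoly=\UJ\beta$ using $\xsoly\in\GJ$, left-multiply by $\UJ^*$ so that $\UJ^*D_J\sigma$ cancels, and invert $\UJ^*\Phi^*\Phi\UJ$ under \eqref{eq:hj}. The only superfluous part is your worry about the reduction being an equivalence: the lemma asserts only a necessary condition on any solution, so nothing further needs checking there.
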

\begin{proof}
  Owing to the first order necessary and sufficient minimality condition (Lemma \ref{lem:first-order}), there exists $\sigma \in \Sig(\xsoly)$ satisfying
  \begin{equation}\label{eq:first-order-x}
    \Phi^* (\Phi \xsoly - y) + \lambda D_I s_I + \lambda D_J \sigma = 0 .
  \end{equation}
  By definition, $\xsoly \in \GJ=(\Im D_J)^\bot$.
  We can then write $\xsoly = \UJ \alpha$ for some $\alpha \in \RR^{\dim(\GJ)}$.
  Since $\UJ^* D_J = 0$, multiplying both sides of \eqref{eq:first-order-x} on the left by $\UJ^*$, we get
  \begin{equation*}
    \UJ^* \Phi^* (\Phi \UJ \alpha - y) + \lambda \UJ^* D_I s_I = 0 .
  \end{equation*}
  Since $\UJ^* \Phi^* \Phi \UJ$ is invertible, the implict equation of $\xsoly$ follows immediately.
\end{proof}

\medskip

Suppose now that a vector satisfies the above implicit equation. The next lemma derives two equivalent necessary and sufficient conditions to guarantee that this vector is actually a (possibly unique) solution to \lasso.
\begin{lem}\label{lem:soldeux}
  Let $y \in \RR^Q$ and let $J$ a \dcosup such that \eqref{eq:hj} holds, and $I = J^c$.
  Suppose that $\xsolyp$ satisfies
  \begin{equation*}
    \xsolyp = \AJ \Phi^* y  - \lambda \AJ D_I s_I .
  \end{equation*}
  where $s = \sign(D^* \xsolyp)$.
  Then, $\xsolyp$ is a solution of \lasso if, and only if, there exists $\sigma \in \RR^{|J|}$ satisfying one of the following equivalent conditions
  \begin{equation}\label{eq:cond-imp}
    \sigma - \Omega^{[J]} s_I + \frac{1}{\lambda} \BJ^{[J]} y \in \Ker D_J \qandq \normi{\sigma} \leq 1 ,
  \end{equation}
  or
  \begin{equation}\label{eq:forme-soldeux}
    \tilde \BJ^{[J]} y - \lambda \tilde \CJ^{[J]} s_I + \lambda D_J \sigma = 0 \qandq \normi{\sigma} \leq 1 ,
  \end{equation}  
  where $\tilde \CJ^{[J]} = (\Phi^* \Phi \AJ - \Id)D_I$, $\tilde \BJ^{[J]} = \Phi^* (\Phi \AJ \Phi^* - \Id)$, $\Omega^{[J]} = D_J^+ \tilde \Omega^{[J]}$ and $\BJ^{[J]} = D_J^+ \tilde \BJ^{[J]}$.\\
  Moreover, if $\normi{\sigma} < 1$ then $\xsolyp$ is the unique solution of \lasso.
\end{lem}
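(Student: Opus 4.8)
The goal is to characterize, via the first-order conditions, when the explicit vector $\xsolyp$ obtained from the implicit equation is an actual minimizer of \lasso. The natural starting point is Lemma~\ref{lem:first-order}: $\xsolyp$ is a solution if and only if there exists $\sigma \in \RR^{|J|}$ with $\normi{\sigma} \leq 1$ such that
\begin{equation*}
  \Phi^*(\Phi \xsolyp - y) + \lambda D_I s_I + \lambda D_J \sigma = 0.
\end{equation*}
So the plan is: substitute the assumed expression $\xsolyp = \AJ \Phi^* y - \lambda \AJ D_I s_I$ into this equation, simplify, and read off the stated conditions. A subtlety to watch: in Lemma~\ref{lem:first-order} the index set $J$ is \emph{the} \dcosup of the candidate, whereas here $J$ is given a priori and $I = J^c$; one must either argue that $\supp(D^* \xsolyp)$ is exactly the prescribed $I$, or — more cleanly — observe that the first-order system above is invariant under enlarging the cosupport as long as the corresponding entries of $D_I^* \xsolyp$ that vanish are handled by letting the matching entries of $s$ be zero, which is consistent with $s = \sign(D^* \xsolyp)$. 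I would state this compatibility explicitly at the outset so the substitution is legitimate.

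\textbf{Deriving \eqref{eq:forme-soldeux}.} Plugging in $\Phi \xsolyp = \Phi \AJ \Phi^* y - \lambda \Phi \AJ D_I s_I$ gives
\begin{equation*}
  \Phi^*\Phi \AJ \Phi^* y - \lambda \Phi^* \Phi \AJ D_I s_I - \Phi^* y + \lambda D_I s_I + \lambda D_J \sigma = 0.
\end{equation*}
Grouping the $y$-terms yields $\Phi^*(\Phi \AJ \Phi^* - \Id) y = \tilde \BJ^{[J]} y$, and grouping the $s_I$-terms yields $-\lambda(\Phi^*\Phi \AJ - \Id) D_I s_I = -\lambda \tilde \CJ^{[J]} s_I$. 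Hence the equation becomes exactly
\begin{equation*}
  \tilde \BJ^{[J]} y - \lambda \tilde \CJ^{[J]} s_I + \lambda D_J \sigma = 0, \qquad \normi{\sigma} \leq 1,
\end{equation*}
which is \eqref{eq:forme-soldeux}. This part is essentially bookkeeping once the definitions of $\tilde \BJ^{[J]}$ and $\tilde \CJ^{[J]}$ are unfolded.

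\textbf{Equivalence with \eqref{eq:cond-imp}.} The step requiring slightly more care is passing from \eqref{eq:forme-soldeux} to \eqref{eq:cond-imp}. Starting from \eqref{eq:forme-soldeux}, I would note that $\tilde \BJ^{[J]} y - \lambda \tilde \CJ^{[J]} s_I = -\lambda D_J \sigma \in \Im D_J$, and then apply $D_J^+$ on the left. Using the general identity $D_J^+ D_J = \Pi$, the orthogonal projector onto $(\Ker D_J)^\perp$, we get $D_J^+ \tilde \BJ^{[J]} y - \lambda D_J^+ \tilde \CJ^{[J]} s_I + \lambda \Pi \sigma = 0$, i.e. (recalling $\Omega^{[J]} = D_J^+ \tilde \CJ^{[J]}$, $\BJ^{[J]} = D_J^+ \tilde \BJ^{[J]}$, and noting the sign so that $\frac{1}{\lambda}\BJ^{[J]} y - \Omega^{[J]} s_I = -\Pi\sigma$) exactly that $\sigma - \Omega^{[J]} s_I + \frac{1}{\lambda}\BJ^{[J]} y = \sigma - \Pi \sigma = (\Id - \Pi)\sigma \in \Ker D_J$, which is \eqref{eq:cond-imp}. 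Conversely, given \eqref{eq:cond-imp}, apply $D_J$ on the left: since $D_J$ annihilates $\Ker D_J$, $D_J(\sigma - \Omega^{[J]} s_I + \frac{1}{\lambda}\BJ^{[J]} y) = 0$, and then one checks $D_J \Omega^{[J]} = D_J D_J^+ \tilde \CJ^{[J]}$; here the point is that $D_J D_J^+$ is the orthogonal projector onto $\Im D_J$, and both $\tilde \BJ^{[J]} y$ and $\tilde \CJ^{[J]} s_I$ already lie in $\Im D_J$ whenever the first-order equation \eqref{eq:forme-soldeux} is consistent — which is guaranteed because $D_J \sigma$ absorbs the residual. Actually the cleaner route for this direction is: from \eqref{eq:cond-imp}, $D_J\sigma = D_J\Omega^{[J]} s_I - \frac{1}{\lambda}D_J\BJ^{[J]} y$, and then one verifies the algebraic identities $D_J D_J^+ \tilde \CJ^{[J]} = \tilde \CJ^{[J]}$ and $D_J D_J^+ \tilde \BJ^{[J]} = \tilde \BJ^{[J]}$, each of which reduces to showing that the columns of $\tilde\CJ^{[J]} D_I$ (resp. the range of $\tilde\BJ^{[J]}$) lie in $\Im D_J$. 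This is the technical crux: it follows from the structure $\Phi^*\Phi\AJ - \Id$ and $\Phi(\Phi\AJ\Phi^* - \Id)$ together with $\AJ = \UJ(\UJ^*\Phi^*\Phi\UJ)^{-1}\UJ^*$ and $\Im \UJ = \GJ = (\Im D_J)^\perp$; concretely, for any $z$, $(\Phi^*\Phi\AJ - \Id)z$ is orthogonal to $\GJ$ because $\UJ^*(\Phi^*\Phi\AJ - \Id)z = \UJ^*\Phi^*\Phi\UJ(\UJ^*\Phi^*\Phi\UJ)^{-1}\UJ^* z - \UJ^* z = 0$, so it belongs to $(\GJ)^\perp = \Im D_J$. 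Rearranging, the equation \eqref{eq:forme-soldeux} holds with the same $\sigma$.

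\textbf{Uniqueness.} For the last claim, suppose $\normi{\sigma} < 1$. Then $\sigma = \sigma_J$ (in the notation of Lemma~\ref{lem:first-uniqueness}, with the prescribed $J \subseteq (I^\star)^c$ where $I^\star = \supp(D^*\xsolyp) \subseteq I$) satisfies $\sigma \in \Sig(\xsolyp)$ — which we have just established is equivalent to $\xsolyp$ being a solution — and $\normi{\sigma_J} < 1$. Since \eqref{eq:hj} holds by hypothesis, Lemma~\ref{lem:first-uniqueness} applies verbatim and yields that $\xsolyp$ is the \emph{unique} solution of \lasso. The main obstacle, as flagged above, is the careful handling of the pseudoinverse identities $D_J D_J^+ \tilde \BJ^{[J]} = \tilde \BJ^{[J]}$ and $D_J D_J^+ \tilde \CJ^{[J]} = \tilde \CJ^{[J]}$ that underpin the equivalence of \eqref{eq:cond-imp} and \eqref{eq:forme-soldeux}; everything else is substitution and invocation of the preparatory lemmata.
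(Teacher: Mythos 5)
Your proposal is correct and follows essentially the same route as the paper: substitute the implicit equation into the first-order optimality condition of Lemma~\ref{lem:first-order} to obtain \eqref{eq:forme-soldeux}, then establish the equivalence with \eqref{eq:cond-imp} via the range inclusions $\Im(\tilde\CJ^{[J]}), \Im(\tilde\BJ^{[J]}) \subseteq \Im D_J$ (which the paper also proves by observing $\UJ^*(\Phi^*\Phi\AJ - \Id) = 0$), and finish with Lemma~\ref{lem:first-uniqueness} for uniqueness. Your explicit flagging of the compatibility between the prescribed $J$ and the actual $D$-cosupport of $\xsolyp$ is a welcome clarification that the paper leaves implicit, but it does not change the argument.
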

\begin{proof}
  First, we observe that $\xsolyp \in \GJ$.
  According to Lemma \ref{lem:first-order}, $\xsolyp$ is a solution of \lasso if, and only if, there exists $\sigma \in \Sig(\xsolyp)$.
  Since \eqref{eq:hj} holds, $\AJ$ is properly defined.
  We can then plug the assumed implicit equation in \eqref{eq:first-order-explicit} to get
  \begin{equation*}
    \Phi^* (\Phi \AJ \Phi^* y  - \lambda \Phi \AJ D_I s_I - y) + \lambda D_I s_I + \lambda D_J \sigma = 0 .
  \end{equation*}
  Rearranging the terms multiplying $y$ and $s_I$, we arrive at
  \begin{equation*}
    \Phi^*(\Phi \AJ \Phi^* - \Id) y - \lambda (\Phi^* \Phi \AJ - \Id)D_I s_I + \lambda D_J \sigma = 0 .
  \end{equation*}
  This shows that $\xsoly$ is a minimizer of \lasso if, and only if
  \begin{equation*}
    \tilde \BJ^{[J]} y - \lambda \tilde \CJ^{[J]} s_I + \lambda D_J \sigma = 0 \qandq \normi{\sigma} \leq 1 .
  \end{equation*}

  To prove the equivalence with \eqref{eq:forme-soldeux}, we first note that $\UJ^* \tilde \CJ^{[J]}=0$ implying that $\Im(\tilde \CJ^{[J]}) \subseteq \Im(D_J)$, and thus $\tilde \CJ^{[J]}=D_J \CJ^{[J]}$.
  With a similar argument, we get $\tilde \BJ^{[J]} = D_J \BJ^{[J]}$.
  Hence, the existence of $\sigma \in \Sig(\xsolyp)$ such that $\normi{\sigma} \leq 1$ is equivalent to
  \begin{equation*}
    D_J \sigma = D_J \CJ^{[J]} s_I - \frac{1}{\lambda} D_J \BJ^{[J]} y \qwhereq \normi{\sigma} \leq 1 ,
  \end{equation*}
  which in turn is equivalent to
  \begin{equation*}
    \sigma - \CJ^{[J]} s_I + \frac{1}{\lambda} \BJ^{[J]} y \in \Ker D_J \qwhereq \normi{\sigma} \leq 1 .
  \end{equation*}
  Replacing the inequality by a strict inequality condition gives the uniqueness of $\xsoly$ by virtue of Lemma~\ref{lem:first-uniqueness}.
\end{proof}

\subsection{Proof of Theorem \ref{thm:small}} 
\label{sub:small}

Recall the analysis identifiability criterion $\IC$ from Definition~\ref{def:asc}. 

\begin{proof}
  The proof is divided in three steps.
  \begin{enumerate}
    \item We give a first condition on $\lambda$ to ensure $\sign(D^* \xsolyp) = \sign(D^* x_0)$.
    \item We then derive another condition on $\frac{\norm{w}_2}{\lambda}$ to guarantee that the minimality conditions are satisfied at $\xsolyp$, and assuming $\IC < 1$ that $\xsolyp$ is the unique solution to \lasso.
    \item We finally prove that these two conditions are compatible.
  \end{enumerate}
  Let's consider the vector
  \begin{equation*}
    \xsolyp = x_0 + \AJ \Phi^* w - \lambda \AJ D_I s_I ,
  \end{equation*}
  where $s = \signxx$.
  Obviously, $\xsolyp \in \GJ$.

  \begin{enumerate}
  \item We first give a condition on $\lambda$ to ensure sign consistency, i.e.
  \begin{equation*}
    \sign(D^* \xsolyp)=\sign(D^* x_0) \eqdef s .
  \end{equation*}
  The two vectors have the same sign if
  \begin{align}\label{eq:sign-cond-small}
    \forall i \in I, \quad \abs{D_I^* x_0}_i & > \abs{D_I^* (\xsolyp - x_0)}_i \nonumber \\
    & = \abs{D_I^* \AJ \Phi^* w - \lambda D_I^* \AJ D_I s_I}_i .
  \end{align}
  Let's upper-bound $\normi{D_I^* (\xsolyp - x_0)}$ as follows
    \begin{align*}
    \normi{D_I^* (\xsolyp - x_0)} &\leq \norm{D_I^* \AJ}_{\infty, \infty} \pa{\normi{\Phi^* w} +\lambda \normi{D_I s_I} } \\
  				  &\leq \norm{D_I^* \AJ}_{\infty, \infty} \pa{\norm{\Phi^*}_{2, \infty}\norm{w}_2 + \la \norm{D_I}_{\infty, \infty}} .
    \end{align*}
  Introducing
  \begin{equation*}
    T = \umin{i \in \ens{1, \cdots, \abs{I}}} \abs{D_I^* x_0}_i > 0 ,
  \end{equation*}
  the condition
  \begin{equation}\label{eq:small-cond-un}
    T > \norm{D_I^* \AJ}_{\infty, \infty} \pa{ \norm{\Phi^*}_{2, \infty}\norm{w}_2 + \lambda \norm{D_I}_{\infty, \infty} },
  \end{equation}
  is sufficient for \eqref{eq:sign-cond-small} to hold true.

  \item We now turn to the second step of the proof.
  Observe that $\tilde \BJ^{[J]} y = \tilde \BJ^{[J]} w$ since $x_0 \in \GJ$.
  Let $\bar u \in \Ker D_J$ a minimizer of $\normi{\Omega^{[J]} s_I - u}$ over $\Ker D_J$.
  We consider the following candidate vector $\sigma \in \RR^{|J|}$ defined by
  \begin{equation*}
    \sigma = - \bar u + \Omega^{[J]} s_I - \frac{1}{\lambda} \BJ^{[J]} w .
  \end{equation*}
  We have
  \begin{equation*}
    \normi{\sigma} \leq \normi{\Omega^{[J]} s_I- \bar u} +\frac{1}{\lambda} \norm{\BJ^{[J]}}_{2, \infty} \norm{w}_2 .
  \end{equation*}
  By definition of $\bar u$,
  \begin{equation*}
    \normi{\sigma} \leq \IC(s) + \frac{1}{\lambda} \norm{\BJ^{[J]}}_{2,\infty} \norm{w}_2 .
  \end{equation*}
  Thus, since $\IC(\signxx) < 1$ and provided that
  \begin{equation}\label{eq:small-cond-deux}
    \norm{\BJ^{[J]}}_{2, \infty} \frac{\norm{w}_2}{\lambda} < 1 - \IC(\signxx) ,
  \end{equation}
  we have $\normi{\sigma} < 1$. Appealing to Lemma \ref{lem:soldeux}, it follows that $\xsolyp$ is the unique solution of \lasso.

  \item Let us show that \eqref{eq:small-cond-un} and \eqref{eq:small-cond-deux} are in agreement.
  We introduce the constants $c_J$ and $\tilde{c}_J$,
   \begin{equation*}
    c_J = \frac{\norm{\BJ^{[J]}}_{2,\infty}}{1-\IC(\signxx)} ,
  \end{equation*}
  and
  \begin{equation*}
    \tilde{c}_J = \left[\norm{D_I^* \AJ}_{\infty, \infty} \pa{ \frac{\norm{\Phi^*}_{2, \infty}}{c_J} + \norm{D_I}_{\infty, \infty} } \right]^{-1} .
  \end{equation*}
  On the one hand, if
  \begin{equation*}
    \lambda < T \tilde{c}_J ,
  \end{equation*}
  then
  \begin{equation*}
    T > \lambda \norm{D_I^* \AJ}_{\infty, \infty} \pa{ \frac{\norm{\Phi^*}_{2, \infty}}{c_J} + \norm{D_I}_{\infty, \infty} }.
  \end{equation*}
  On the other hand, if
  \begin{equation*}
    c_J \norm{w}_2 < \lambda.
  \end{equation*}
  then
  \begin{equation*}
    T > \norm{D_I^* \AJ}_{\infty, \infty} \pa{ \norm{\Phi^*}_{2, \infty} \norm{w}_2 + \lambda \norm{D_I}_{\infty, \infty} }
  \end{equation*}
  which is condition \eqref{eq:small-cond-un}.
  Moreover, $c_J \norm{w}_2 < \lambda$ also implies that
  \begin{equation*}
    \frac{\norm{\BJ^{[J]}}_{2,\infty}}{1-\IC(\signxx)} \frac{\norm{w}_2}{\lambda} < 1 ,
  \end{equation*}
  which is condition \eqref{eq:small-cond-deux}.
 \end{enumerate}
\end{proof}

\subsection{Proof of Proposition \ref{prop:icsupnoise}}
Proposition \ref{prop:icsupnoise} is a simple consequence of Lemmata~\ref{lem:sol} and~\ref{lem:soldeux}.
\begin{proof}
	Let $\xsoly$ be a solution of \lasso.
  Suppose that $\signxx = \sign(D^* \xsoly)$.
  As a consequence, $J$ is the \dcosup of $\xsoly$.
  According to Lemmata \ref{lem:sol} and \ref{lem:soldeux}, there exists $\sigma$ such that $\normi{\sigma} \leq 1$ and
    \begin{equation*}
    \sigma - \Omega^{[J]} s_I + \frac{1}{\lambda} \BJ^{[J]} w \in \Ker D_J \qwhereq s = \signxx ,
  \end{equation*}
  or equivalently, there exists $-u \in \Ker D_J$ such that
  \begin{equation*}
    \sigma = \Omega^{[J]} s_I - u - \frac{1}{\lambda} \BJ^{[J]} w .
  \end{equation*}
  It follows that
  \begin{equation*}
  	\normi{\sigma} \geq 
  		\left| \normi{\Omega^{[J]} s_I - u}
  		- \frac{1}{\lambda} \normi{\BJ^{[J]} w} \right|
  \end{equation*}
  Since $\normi{\Omega^{[J]} s_I - u} \geq \IC(s)$ and $\frac{1}{\lambda} \normi{\BJ^{[J]} w} < \IC(s) - 1$ by assumption, we have
  \begin{equation*}
  	\normi{\Omega^{[J]} s_I - u} - \frac{1}{\lambda} \normi{\BJ^{[J]} w} 
  	\geq \IC(s) - \frac{1}{\lambda} \normi{\BJ^{[J]} w} 
  	> 1
  \end{equation*}
  This imples
  \begin{equation*}
  	\normi{\sigma} > 1,
  \end{equation*}
  which is a contradiction.
\end{proof}

\subsection{Proof of Theorem \ref{thm:noiseless}} 
\label{sub:noiseless}

Theorem \ref{thm:noiseless} is proved in three steps.
\begin{enumerate}
\item First, we specialize Theorem \ref{thm:small} to the case $w=0$.
\item Then, we show that under the condition $\IC(\signxx) < 1$, the vector $x_0$ is a solution of \bp.
\item Finally, we prove Theorem \ref{thm:noiseless} by considering another feasible vector of \bp.
\end{enumerate}

\begin{cor}\label{cor:unique}
  Let $x_0 \in \RR^N$ be a fixed vector, $I$ be its \dsup, and $y = \Phi x_0$.
  Suppose that \eqref{eq:hj} holds and $\IC(\signxx) < 1$.
  Let $T = \min_{i \in \ens{1, \cdots, \abs{I}}} \abs{D_I^* x_0}_i$.
  Then for $\lambda < T \tilde{c}_J$,
  \begin{equation*}
    \xsolyp = x_0 - \lambda \AJ D_I s_I \qwhereq s = \signxx .
  \end{equation*}
  is the unique solution of \lasso.
\end{cor}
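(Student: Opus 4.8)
The plan is to obtain Corollary~\ref{cor:unique} as the immediate specialization of Theorem~\ref{thm:small} to the noiseless regime, i.e.\ by setting $w = 0$ and $y = \Phi x_0$ in that theorem. The hypotheses of the corollary, namely condition \eqref{eq:hj} and $\IC(\signxx) < 1$, are exactly those required by Theorem~\ref{thm:small}, so the theorem applies verbatim once $w$ is taken to be zero; what remains is just to trace how each of its quantitative conditions and conclusions degenerates in this case.

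First I would note that the constants $c_J$ and $\tilde{c}_J$ produced in the proof of Theorem~\ref{thm:small} are strictly positive: $c_J = \norm{\BJ^{[J]}}_{2,\infty}/(1 - \IC(\signxx)) > 0$ since $1 - \IC(\signxx) > 0$ by hypothesis, and $\tilde{c}_J > 0$ as the reciprocal of a positive quantity; likewise $T = \min_{i} \abs{D_I^* x_0}_i > 0$ because $I$ is the \dsup of $x_0$, so every coordinate of $D_I^* x_0$ is nonzero. With $w = 0$, the compatibility requirement $\norm{w}_2/T < \tilde{c}_J/c_J$ of Theorem~\ref{thm:small} becomes $0 < \tilde{c}_J/c_J$, which holds, and the admissible window $c_J \norm{w}_2 < \lambda < T\tilde{c}_J$ shrinks to $0 < \lambda < T\tilde{c}_J$, that is, exactly the hypothesis $\lambda < T\tilde{c}_J$ together with the standing assumption $\lambda > 0$. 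Then I would substitute $w = 0$ into the closed form \eqref{eq:sol-small}, which gives
\begin{equation*}
  \xsolyp = x_0 + \AJ \Phi^* w - \lambda \AJ D_I \signxx = x_0 - \lambda \AJ D_I \signxx,
\end{equation*}
precisely the vector displayed in the corollary (with $s = \signxx$). Theorem~\ref{thm:small} then yields that this $\xsolyp$ is the unique solution of \lasso, and moreover that $\xsolyp \in \GJ$ with $\sign(D_I^* \xsolyp) = \signxx$, which establishes the claim.

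I do not expect any genuine obstacle here: the argument is entirely bookkeeping, and the only place that needs a line of justification is the strict positivity of $c_J$, $\tilde{c}_J$ and $T$, which ensures that the inequalities inherited from Theorem~\ref{thm:small} remain meaningful constraints — rather than becoming vacuous or ill-posed — once $w$ vanishes.
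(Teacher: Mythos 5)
Your proof is correct and is exactly the paper's argument: the paper's entire proof of Corollary~\ref{cor:unique} is ``Take $w=0$ in Theorem~\ref{thm:small}.'' Your additional remarks on the positivity of $c_J$, $\tilde{c}_J$, $T$ and on how the admissible window for $\lambda$ degenerates are sound bookkeeping that the paper leaves implicit.
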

\begin{proof}
  Take $w = 0$ in Theorem~\ref{thm:small}.
\end{proof}

\begin{lem}\label{lem:cv-constrained}
  Let $x_0 \in \RR^N$ be a fixed vector, $I$ be its \dsup, and $y = \Phi x_0$.
  Suppose that \eqref{eq:hj} holds and $\IC(\signxx) < 1$.
  Then $x_0$ is a solution of \bp.
\end{lem}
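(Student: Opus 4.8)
The plan is to produce a dual certificate for $x_0$. Since $x_0$ is feasible for \bp ($\Phi x_0 = y$ by assumption), it suffices to exhibit $p \in \RR^Q$ such that $\Phi^* p \in \partial \normu{D^* \cdot}(x_0)$, where, as computed in the proof of Lemma~\ref{lem:first-order}, $\partial \normu{D^* \cdot}(x_0) = \enscond{D u}{u_I = \sign(D_I^* x_0) \text{ and } \normi{u_J} \leq 1}$ with $I$ the \dsup of $x_0$ and $J = I^c$. Indeed, if such a $p$ exists then for every $x$ with $\Phi x = y$ convexity gives $\normu{D^* x} \geq \normu{D^* x_0} + \dotp{\Phi^* p}{x - x_0} = \normu{D^* x_0} + \dotp{p}{\Phi x - \Phi x_0} = \normu{D^* x_0}$, so $x_0$ minimizes $\normu{D^* \cdot}$ on the feasible set.

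To build $p$ I would specialize Theorem~\ref{thm:small} to $w = 0$, i.e.\ invoke Corollary~\ref{cor:unique}: for $0 < \lambda < T \tilde c_J$ the unique solution of \lasso with $y = \Phi x_0$ is $\xsolyp = x_0 - \lambda \AJ D_I s_I$ with $s = \signxx$; moreover $\xsolyp \in \GJ$ and $\sign(D_I^* \xsolyp) = s_I$, so the \dsup of $\xsolyp$ is exactly $I$ and its \dcosup is $J$. Applying the first-order conditions of Lemma~\ref{lem:first-order} to $\xsolyp$ then yields $\sigma \in \RR^{\abs{J}}$ with $\normi{\sigma} \leq 1$ and $\Phi^*(\Phi \xsolyp - y) + \lambda D_I s_I + \lambda D_J \sigma = 0$.

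The rest is algebraic. Since $x_0 \in \GJ$, one has $\AJ \Phi^* \Phi x_0 = x_0$ (immediate from $\AJ = \UJ(\UJ^* \Phi^* \Phi \UJ)^{-1}\UJ^*$ and $x_0 \in \Im \UJ$), whence $\Phi \xsolyp - y = -\lambda \Phi \AJ D_I s_I$. Substituting into the first-order identity and dividing by $\lambda$ gives $D_I s_I + D_J \sigma = \Phi^* \Phi \AJ D_I s_I = \Phi^*(\Phi \AJ D_I s_I)$. Setting $p = \Phi \AJ D_I s_I$ and letting $u$ be the vector with $u_I = s_I$ and $u_J = \sigma$, we obtain $\Phi^* p = D u$ with $u_I = \sign(D_I^* x_0)$ and $\normi{u_J} = \normi{\sigma} \leq 1$, i.e.\ $\Phi^* p \in \partial \normu{D^* \cdot}(x_0)$. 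This is exactly the certificate sought, so $x_0$ solves \bp.

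I expect no genuine obstacle: the argument is entirely a consequence of already-proved material, and the only care needed is the index bookkeeping — checking that $\xsolyp$ really has \dsup $I$ and \dcosup $J$ (so Lemma~\ref{lem:first-order} yields the identity with the correct block structure) and that $\AJ \Phi^* \Phi$ restricts to the identity on $\GJ$, both of which are immediate. An alternative route that bypasses Corollary~\ref{cor:unique} uses the factorization $\tilde \CJ^{[J]} = D_J \CJ^{[J]}$ established in the proof of Lemma~\ref{lem:soldeux}, which gives $\Phi^* \Phi \AJ D_I s_I = D_I s_I + D_J \CJ^{[J]} s_I$; then, if $\bar u \in \Ker D_J$ attains the minimum defining $\IC(s)$, the vector $u$ with $u_I = s_I$ and $u_J = \CJ^{[J]} s_I - \bar u$ satisfies $D u = \Phi^*(\Phi \AJ D_I s_I)$ and $\normi{u_J} = \IC(s) < 1$, producing the certificate directly from the hypothesis $\IC(\signxx) < 1$.
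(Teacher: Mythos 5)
Your proof is correct, but it takes a genuinely different route from the paper's. The paper argues variationally: by Corollary~\ref{cor:unique}, for $\lambda < T\tilde c_J$ the unique \lasso minimizer is $\xsoly_\lambda = x_0 - \lambda \AJ D_I s_I$; comparing $\obj(\xsoly_\lambda)$ with $\obj(x_{(1)})$ for any feasible $x_{(1)}$ and dropping the nonnegative data-fidelity term gives $\normu{D^* \xsoly_\lambda} \leq \normu{D^* x_{(1)}}$, and letting $\lambda \to 0^+$ (so $\xsoly_\lambda \to x_0$) yields $\normu{D^* x_0} \leq \normu{D^* x_{(1)}}$ by continuity. You instead build an explicit dual certificate $p = \Phi \AJ D_I s_I$ with $\Phi^* p \in \partial \normu{D^* \cdot}(x_0)$ and conclude by the subgradient inequality; both your derivations of the certificate (via the first-order identity at $\xsolyp$, or directly via $\tilde \CJ^{[J]} = D_J \CJ^{[J]}$ and a minimizer $\bar u$ of the $\IC$ problem) are sound, the minimum defining $\IC$ being attained since the objective is coercive on the subspace $\Ker D_J$. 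Your second variant is in substance the paper's own proof of Proposition~\ref{prop:source} (the source condition $D\alpha \in \Im \Phi^*$ with $\normi{\alpha_J} < 1$), relocated to prove optimality for \bp; it has the advantage of bypassing Theorem~\ref{thm:small}/Corollary~\ref{cor:unique} entirely and of producing a certificate that is strict on $J$, which together with \eqref{eq:hj} would even yield uniqueness directly (making Lemma~\ref{lem:cv-constrained} and the subsequent perturbation step in the proof of Theorem~\ref{thm:noiseless} collapsible into one argument). The paper's limiting argument buys brevity given the machinery already established, but only delivers that $x_0$ is \emph{a} solution, deferring uniqueness to a separate step.
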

\begin{proof}
  According to Corollary \ref{cor:unique}, $\lasso$ has a unique solution for $\lambda < T \tilde{c}_J$,
  \begin{equation*}
    \xsoly_\lambda \eqdef \xsolyp_\lambda = x_0 - \lambda \AJ D_I s_I,
  \end{equation*}
  where $s = \signxx$.
  Let $x_{(1)} \neq x_0$ such that $\Phi x_{(1)} = y$.
  For every $\lambda > 0$, one has $\obj(\xsoly_\lambda) < \obj(x_{(1)})$ by definition of $x_\lambda$.
  Then,
  \begin{equation*}
    \norm{D^* \xsoly_\lambda}_1 < \norm{D^* x_{(1)}}_1 .
  \end{equation*}
  By continuity of the norm, and taking the limit as $\lambda \to 0$ in the last inequality yields
  \begin{equation*}
    \norm{D^* x_0}_1 \leqslant \norm{D^* x_{(1)}}_1 ,
  \end{equation*}
  whence it follows that $x_0$ is a solution of \bp.
\end{proof}

\begin{proof}[Proof of Theorem \ref{thm:noiseless}]
  Using Lemma \ref{lem:cv-constrained}, $x_0$ is a solution of \bp.
  We shall prove that $x_0$ is actually unique.
  Let
  \begin{equation*}
    x_{(1)} = x_0 + \lambda \AJ D_I s_I .
  \end{equation*}
  For $\lambda$ small enough, one has $\sign(D^* x_{(1)}) = \sign(D^* x_0)$.
  Then if $\IC(\signxx) < 1$, it follows from Corollary \ref{cor:unique} that $x_0$ is the unique solution of (\lassoP{y_1}{\lambda}) where $y_1 = \Phi x_{(1)}$.

  Let $x_{(2)} \in \RR^N$ be another feasible point of $\bp$, i.e. $\Phi x_{(2)} = y = \Phi x_0$ with $x_{(2)} \neq x_0$.
  Since $x_0$ is the unique solution of (\lassoP{y_1}{\lambda}), we obtain
  \begin{equation*}
    \dfrac{1}{2} \norm{y_1 - \Phi x_0}_2^2 + \lambda \normu{D^* x_0} < \dfrac{1}{2} \norm{y_1 - \Phi x_{(2)}}_2^2 + \lambda \normu{D^* x_{(2)}} 
  \end{equation*}
  which implies that
  \begin{equation*}
    \normu{D^* x_0} < \normu{D^* x_{(2)}} .
  \end{equation*}
  This proves that indeed $x_0$ is the unique solution of \bp.
\end{proof}


\subsection{Proof of Theorem \ref{thm:noise}} 
\label{sub:noisy}

Recall the Recovery Criterion $\ARC$ from Definition~\ref{def:arc}. 

\begin{proof}
  Consider the following restricted problem
  \begin{equation}\label{eq:restricted-lasso}\tag{$\lassoRtag$}
    \umin{x \in \GJ} \minform .
  \end{equation}
  Our strategy is to construct a solution of \lassoR, and to show that it is the unique solution of \lasso.
  To achieve this goal, we split the proof into four steps:
  \begin{enumerate}
    \item We exhibit $p_I^\star \in \RR^{\abs{I}}$ such that
    \begin{equation*}
      \UJ^* \left[ \Phi^* ( \Phi \xsoly - y) + \lambda D_I p_I^\star \right] = 0 .
    \end{equation*}
    \item We prove that $\xsoly$ satisfies an implicit equation of the form
    \begin{equation*}
      \xsoly = \AJ \Phi^* y - \lambda \AJ D_I p_I^\star .
    \end{equation*}
    \item We prove that $\xsoly$ satisfies the first-order minimality conditions of Lemma~\ref{lem:first-order} using the construction of $p_I^\star$.
    \item Finally, we derive the $\ldeux$-robustness bound.
  \end{enumerate}
  By a simple change of variable $x=U\alpha$, we rewrite \lassoR in an unconstrained form
  \begin{equation*}
    \uargmin{\alpha \in \RR^{\dim \GJ}} \dfrac{1}{2} \norm{y - \Phi \UJ \alpha}_2^2 + \lambda \normu{D_I^* \UJ \alpha} .
  \end{equation*}

  \begin{enumerate}
  \item Applying Lemma \ref{lem:first-order} with $\Phi \UJ$ and $D_I^* \UJ$ instead of $\Phi$ and $D^*$, $\alpha^\star$ is a solution of \lassoR if, and only if, there exists $\sigma^\star$ with $\normi{\sigma^\star} \leq 1$ such that
  \begin{equation*}
    \UJ^* \Phi^* ( \Phi \UJ \alpha^\star - y) + \lambda (\UJ^* D_I)_{I^\star} s_{I^\star} + \lambda (\UJ^* D_I)_{J^\star} \sigma^\star = 0 ,
  \end{equation*}
  where $I^\star \subseteq I$ is the $(U^*D_I)$-support of $U \alpha^\star$ and $J^\star = I \setminus I^\star$.
  We introduce $p_I^\star \in \RR^{\abs{I}}$ defined as
  \begin{equation*}
    \forall i \in I, \quad
    \left( p_I^\star \right)_i
    =
    \begin{cases}
      s_i & \text{if } i \in I^\star \\
      \sigma_i^\star & \text{if } i \in J^\star,
    \end{cases}
  \end{equation*}
  which satisfies
  \begin{equation*}
    D_I p_I^\star = D_{I^\star} s_{I^\star} + D_{J^\star} \sigma^\star .
  \end{equation*}
  The above first-order optimality condition then takes the compact form
  \begin{equation}\label{eq:fst-rest}
    \UJ^* \left[ \Phi^* ( \Phi \UJ \alpha^\star - y) + \lambda D_I p_I^\star \right] = 0 .
  \end{equation}

  \item  Owing to condition \eqref{eq:hj}, $U^* \Phi^* \Phi U$ is invertible, and we obtain
  \begin{equation*}
    \alpha^\star = (U^* \Phi^* \Phi U)^{-1} U^* \Phi^* y - \lambda (U^* \Phi^* \Phi U)^{-1} U^* D_I p_I^\star .
  \end{equation*}
  Multiplying both sides by $\UJ$ recovers $\xsoly=\UJ\alpha^\star$ as
  \begin{equation}\label{eq:imp-rest}
    \xsoly = \AJ \Phi^* y - \lambda \AJ D_I p_I^\star .
  \end{equation}

  \item We now prove that $\xsoly$ is a solution of \lasso, i.e. there exists $\sigma$ such that
  \begin{equation*}
    \Phi^*(\Phi \xsoly - y) + \lambda D_{I^\star} s_{I^\star} + \lambda D_{J \cup J^\star} \sigma = 0 \qandq \normi{\sigma} \leq 1 .
  \end{equation*}
  Take $\bar u$ such that
  \begin{equation*}
    \bar u \in \uargmin{u \in \Ker D_J} \normi{\CJ^{[J]} p_I^\star - u} ,
  \end{equation*}
  and
  \begin{equation}
    \label{eq:sigbar}
    \bar \sigma = \CJ^{[J]} p_I^\star - \bar u - \frac{1}{\lambda} \BJ^{[J]} w .
  \end{equation}
  We recall from Lemma~\ref{lem:soldeux} that
  \begin{align*}
    \tilde \CJ^{[J]} &= (\Phi^* \Phi \AJ - \Id)D_I, & \tilde \BJ^{[J]} &= \Phi^* (\Phi \AJ \Phi^* - \Id), \\
    \Omega^{[J]} &= D_J^+ \tilde \Omega^{[J]}, & \BJ^{[J]} &= D_J^+ \tilde \BJ^{[J]}.
  \end{align*}
  Plugging \eqref{eq:imp-rest}, we get
  \begin{align*}
    &\Phi^*(\Phi \xsoly - y) + \lambda D_I p_I^\star + \lambda D_J \bar \sigma \\
    =\quad & \Phi^*(\Phi (\AJ \Phi^* y - \lambda \AJ D_I p_I^\star) - y) \\
    & \quad + \lambda D_I p_I^\star + \lambda D_J D_J^+ \tilde \CJ^{[J]} p_I^\star \\
    & \quad - \underbrace{\lambda D_J \bar u}_{=0} - D_J D_J^+ \tilde \BJ^{[J]} y \\
    =\quad & (\Id - D_J D_J^+) (\tilde \BJ^{[J]} y - \lambda \tilde \CJ^{[J]} p_I^\star) \\
    =\quad & (\Id - D_J D_J^+) \left[ \Phi^*(\Phi \xsoly - y) + \lambda D_I p_I^\star \right] .
  \end{align*}
  Let's denote $v = \Phi^*(\Phi \xsoly - y) + \lambda D_I p_I^\star$.
  From \eqref{eq:fst-rest}, we have $v \in \ker(U^*)=\Im(U)^\bot=\GJ^\bot$.
  Since $(\Id - D_J D_J^+)$ is the orthogonal projector on $\Im(D_J)^\bot=\GJ$, we conclude that $(\Id - D_J D_J^+)v=0$.
  It then follows that
  \begin{equation*}
    \Phi^*(\Phi \xsoly - y) + \lambda D_I p_I^\star + \lambda D_J \bar \sigma = 0 .
  \end{equation*}
  We can then write the bound
  \begin{equation*}
    \normi{\bar \sigma} \leq \normi{\Omega^{[J]} p_I^\star - \bar u} + \frac{1}{\lambda} \norm{\BJ^{[J]}}_{2,\infty} \norm{w}_2 .
  \end{equation*}
  From \eqref{eq:sigbar}, and by definition of $\bar{u}$ we get the bound
  \begin{align*}
    \normi{\bar \sigma} &\leq \umin{u \in \Ker D_J} \normi{\CJ^{[J]} p_I^\star - u} + \frac{1}{\lambda} \norm{\BJ^{[J]}}_{2,\infty} \norm{w}_2 \\
    			&\leq \ARC(I) + \frac{1}{\lambda} \norm{\BJ^{[J]}}_{2,\infty} \norm{w}_2 .
  \end{align*}
  Let $\sigma$ defined by
  \begin{equation*}
    \forall j \in \ens{1,\dots,P} \setminus I, \quad \sigma_j =
    \begin{cases}
      \sigma_j^\star & \text{if } j \in J^\star \\
      \bar{\sigma}_j & \text{if } j \in J,
    \end{cases}
  \end{equation*}
  Since by assumption $\ARC(I) < 1$ and
  \begin{equation*}
    \lambda > \norm{w}_2 \frac{c_J}{1 - \ARC(I)} \qwhereq c_J = \norm{\BJ^{[J]}}_{2,\infty},
  \end{equation*}
  we get $\normi{\bar \sigma} < 1$ and $\normi{\sigma} = \max(\normi{\bar \sigma}, \normi{\sigma^\star}) \leq 1$.
  Invoking Lemma~\ref{lem:first-order}, we conclude that $\xsoly$ is a solution of \lasso.
  Moreover, since $\normi{\bar \sigma} < 1$ and \eqref{eq:hj} holds, $\xsoly$ is the unique solution of \lasso according to Lemma~\ref{lem:first-uniqueness}.
  
  \item We now bound the $\ldeux$-distance between $x_0$ and $\xsoly$.
  \begin{equation*}
    \norm{\xsoly - x_0}_2 = \norm{\AJ \Phi^* y - \lambda \AJ D_I p_I^\star - x_0}_2 .
  \end{equation*}
  Since $x_0 \in \GJ$, we have $\AJ \Phi^* y = x_0 + \AJ \Phi^* w$.
  Consequently
  \begin{multline*}
    \norm{\xsoly - x_0}_2 =    \norm{\AJ(\Phi^* w - \lambda D_I p_I^\star)}_2 \\
     			  \leq \norm{\AJ}_{2, 2} \norm{w}_2 \left( \norm{\Phi^*}_{2, 2} + \dfrac{\rho c_J}{1 - \ARC(I)} \norm{D_I}_{2, \infty} \right) .
 \end{multline*}
  \end{enumerate}
  This concludes the proof.
\end{proof}



\section*{Conclusion} 
\label{sec:conclusion}
In this paper, we provided theoretical guarantees for accurate and robust recovery with $\lun$-analysis sparse regularization.
We derived a sufficient condition under which the $D$-support and sign of the true signal can be exactly identified in presence of a small enough noise (and a fortiori without noise). We  showed that this condition for support recovery is in some sense sharp. We proposed a stronger condition to ensure a partial support recovery for arbitrary noise if the regularization parameter is sufficiently large. As a by product, these conditions also guarantee robustness in $\ldeux$-error. Some examples were provided and discussed to illustrate our results. For discrete 1-D total variation regularization, we show that staircasing induces an instability of the $D$-support, i.e. jumps are not preserved. We believe that these contributions will allow to gain a better understanding of the behavior of sparse analysis regularizations. We would like to emphasize that a distinctive feature of our approach with respect to the literature is that we have guarantees on the robustness of the cospace associated to the true signal. This approach often has a meaningful interpretation (such as the conservation of jumps for total variation regularization). 


\section*{Acknowledgment}
We would like to thank the anonymous referees for their role in improving the original manuscript. This work was partially supported by the SIGMA-Vision ERC grant.


\appendix[Proof of Proposition \ref{prop:source}]
\label{sec:proof_source}

Let $s = \sign(D^* x_0)$ and $J$ the \dcosup of $s$.
Let $\bar u \in \Ker D_J$ such that
\begin{equation*}
  \norm{\CJ^{[J]} s_I - \bar u}_\infty = \IC(s) .
\end{equation*}

Let $\alpha$ be such that $\alpha_I=s_I$ and $\alpha_J = \CJ^{[J]} s_I - \bar u$. Since $\IC(s) < 1$, we have $\norm{\alpha}_\infty=\max(s_I,\alpha_J) \leq 1$, which shows that $\alpha \in \partial \norm{\cdot}_1(D^* x_0)$, and therefore that $D\alpha \in \norm{D^* \cdot}_1 (x_0)$.

Now, as $\CJ^{[J]} = D_J^+ \tilde \Omega^{[J]}$ and $\Im \tilde \CJ^{[J]} \subseteq \Im D_J$, we have
\begin{equation}\label{eq:mult-app}
  D_J \CJ^{[J]} = D_J D_J^+ \tilde \CJ^{[J]} = P_{\Im D_J} \tilde \CJ^{[J]} = \tilde \CJ^{[J]} ,
\end{equation}
where $P_{\Im D_J}$ is the orthogonal projection on $\Im D_J$.
Since $\bar u \in \Ker D_J$ and owing to \eqref{eq:mult-app}, we get
\begin{equation*}
  D_J \alpha_J = D_J ( \CJ^{[J]} s_I - \bar u ) = D_J \CJ^{[J]} s_I = \tilde \CJ^{[J]} s_I ~.
\end{equation*}
Using the expression of $\tilde \CJ^{[J]} = (\Phi^* \Phi \AJ - \Id)D_I$, we obtain
\begin{equation*}
  D_J \alpha_J = \Phi^* \Phi \AJ D_Is_I - D_I s_I  = \Phi^* \Phi \AJ D_I s_I - D_I \alpha_I ~.
\end{equation*}
Choosing $\eta = \Phi \AJ D_I s_I$, and since $D \alpha = D_I \alpha_I + D_J \alpha_J$, we arrive at
\begin{equation*}
  \Phi^* \eta = D \alpha ~,
\end{equation*}
or equivalently that $D\alpha \in \Im \Phi^*$. This concludes the proof.

\bibliographystyle{IEEEtran}
\bibliography{robust-analysis}

\end{document}